  \theoremstyle{plain}
    \newtheorem{thm}{Theorem}[section]
    \newtheorem{prop}[thm]{Proposition}
   \newtheorem{lemma}[thm]{Lemma}
    \newtheorem{subsec}[thm]{}
\theoremstyle{definition}
    \newtheorem{defn}[thm]{Definition}
    \newtheorem{exam}[thm]{Example}
\theoremstyle{remark}
     \newtheorem{remark}[thm]{Remark}
\newcommand{\R}{\mathbb{R}}
\title{}
\author{}
\date{}
\begin{document}

\title{On Generalized Lie Bialgebroids}



\author{Apurba Das}
\email{ apurbadas348@gmail.com}
\address{Stat-Math Unit,
Indian Statistical Institute, Kolkata 700108,
West Bengal, India.}

\subjclass[2010]{17B62, 17B63, 53C15, 53D17.}
\keywords{Jacobi manifold, Lie algebroid, Lie bialgebroid.}

\thispagestyle{empty}

\begin{abstract} 
An alternative proof of the duality of generalized Lie bialgebroid is given and proved a canonical Jacobi structure can be defined on the base of it. We also introduce the notion of morphism between generalized Lie bialgebroids and proved that the induced Jacobi structure is unique upto a morphism.
\end{abstract}
\maketitle

\section{Introduction} 

The notion of Lie bialgebroid is introduced by Mackenzie and Xu \cite{3} as a generalization of Lie bialgebra and infinitesimal version of Poisson groupoid. Roughly, a Lie bialgebroid is a Lie algebroid $A$ over $M$ such that its dual vector bundle $A^*$ also carries a Lie algebroid structure which is compatible in a certain way with that of $A$. It is shown in \cite{3}, \cite{5} that, if $(A, A^*)$ satisfy the criteria for Lie bialgebroid, then $(A^*, A)$ also satisfy the similar criteria. As an example, if $(M, \pi)$ is a Poisson manifold, then $(TM, T^*M)$ forms a Lie bialgebroid, where $T M$ is the usual tangent Lie algebroid and $ T^*M$ is the Lie algebroid given in the example 2.4 (iii). As a kind of converse, it is also proved that the base space of a Lie bialgebroid carries natural Poisson structure.\\

For any smooth manifold $M$, the vector bundle $ TM \times \mathbb{R} \rightarrow M $ has a natural Lie algebroid structure with $\phi_0 = (0,1) \in \Gamma (T^*M \times \mathbb{R}) = \Omega ^{1} (M) \times \mathbb{R} $ as its 1-cocycle.
Moreover, if $(M, \Lambda, E)$ is a Jacobi manifold, then the 1-jet bundle $T^*M \times \mathbb{R} \rightarrow M $ admits a Lie algebroid structure (example 2.4(iv)) and $X_0 = (-E,0) \in \Gamma (TM \times \mathbb{R})$ is a 1-cocycle of it. In general, the pair $(TM \times \mathbb{R}, T^*M \times \mathbb{R})$ is not a Lie bialgebroid. However if we consider the Lie algebroids $TM \times \mathbb{R}$ and $T^*M \times \mathbb{R}$ together with 1-cocycles $\phi_0 = (0, 1)$ and $X_0 = (-E, 0)$ respectively, then they satisfy some compatibility condition. Motivated from this, Iglesias and Marrero \cite{1} introduced the notion of generalized Lie bialgebroid, so that Jacobi manifold constitutes a generalized Lie bialgebroid. Roughly, a generalized Lie bialgebroid is a pair $((A, \phi_0), (A^*, X_0))$, where $A$ is a Lie algebroid with 1-cocycle $\phi_0 \in \Gamma A^*$, and the dual vector bundle $A^*$ also carries a Lie algebroid structure with $X_0 \in \Gamma A$ as its 1-cocycle and satisfy some compatibility condition in the presence of 1-cocycles. If $\phi_0 = 0$ and $X_0 = 0$, we recover the definition the Lie bialgebroid.
Using the duality result of Lie bialgebroid, it is also proved in \cite{1} that if $((A, \phi_0), (A^*, X_0))$ is a generalized Lie bialgebroid, then $((A^*, X_0), (A, \phi_0))$ is also a generalized Lie bialgebroid (i.e, self-dual). Moreover the base of a generalized Lie bialgebroid carries a natural Jacobi structure.

In this paper, we give a direct proof of the fact that, the concept of generalized Lie bialgebroid is a self-dual without assuming self duality property of Lie bialgebroid and a canonical Jacobi structure is induced on the base of a generalized Lie bialgebroid. We also introduce the notion of morphism between generalized Lie bialgebroids and proved that the induced Jacobi structure is unique upto a morphism.

The paper is organized as follows. In section 2, we recall few defintions and examples. In section 3, we mainly summarize the calculas on lie algebroids in the presence of 1-cocyle. In section 4, we prove the duality of generalized Lie bialgebroid and induced Jacobi structure on the base. Here we also introduce generalized Lie bialgebroid morphism and proved a theorem. Section 5 consists of Triangular Lie bialgebroids.

\section{Preliminaries}

In this section, we recall the definition of Jacobi manifolds \cite{1}, Lie algebroids and Lie bialgebroids (\cite{3}, \cite{4}).

\begin{defn}\label{jacobi manifold}
 Let $M$ be a smooth manifold. A Jacobi structure on M $(2 \leqslant dim M)$ is a bilinear skew symmetric map
$$ \{. , .\} : C^\infty(M) \times C^\infty(M) \rightarrow C^\infty(M) $$
satisfying\\
(i) first order diffrential operator in each argument :
$$\{fg , h \} = f \{g ,h \} + g \{f ,h \} - fg \{1 ,h \}$$
(ii) Jacobi identity :
$$ \{\{f,g\},h\} + \{\{g,h\},f\} + \{\{h,f\},g\} = 0 $$ 
for all $f, g, h \in C^\infty(M).$ A manifold $M$ endowed with such a bracket is called a Jacobi manifold.\\
If $(M, \{ , \})$ is a Jacobi manifold, then from the skew symmetry and property (i) of the bracket, one can associate a bivector field $\Lambda$ and a vector field $E$ on $M$
such that
$$E(\delta f) = \{1, f\}$$
$$ \Lambda (\delta f,\delta g) = \{f,g\} - f E(g) + g E(f) $$

Note that if $E = 0,$ then (M,$\Lambda$) is Poisson manifold \cite{9}.
\end{defn}

\begin{exam}
 (i) Any Poisson manifold is a Jacobi manifold with $E= 0$.\\
(ii) Contact manifolds and l.c.s manifolds are also Jacobi manifolds \cite{2}.
\end{exam}

\begin{defn}\label{lie algebroid}
 A Lie algebroid $(A, [~, ~], \rho)$ over a manifold M is a vector bundle $A$ over
M together with a bundle map $\rho : A \rightarrow T M $ , called the anchor and a Lie
algebra structure $[~, ~]$ on the space $\Gamma(A)$ of the sections of $A,$ such that\\
i) the induced map $ \rho : \Gamma(A) \rightarrow \mathfrak{X}(M) $ is a Lie algebra homomorphism;\\
ii) for any $f \in C^\infty (M)$ and $ X, Y \in \Gamma(A) $, then\\
$$[X, f Y ] = f [X, Y ] + (\rho(X)f )Y.$$
\end{defn}

\begin{exam}

(i)  any Lie algebra can be considered as a Lie algebroid over a point.\\
(ii) For any smooth manifold $M$, its tangent bundle $T M$ is a lie algebroid over $M$ with usual lie bracket on vector fields and the identity map as the anchor map.\\
(iii) Let $(M, \pi)$ be a Poisson manifold and $\pi ^{\sharp} : T^*M \rightarrow T M$ be the bundle map given by $\langle \beta, \pi ^{\sharp} (\alpha) \rangle = \pi (\alpha, \beta)$, then $(T^* (M), [~, ~]_{\pi}, \pi ^{\sharp})$ is a Lie Algebroid over $M$ \cite{9}, where $[~ , ~]_{\pi}$
is the bracket of 1-forms defined by
$$[\alpha, \beta ]_{\pi} = \mathcal{L}_{\pi ^{\sharp} (\alpha)} \beta - \mathcal{L}_{\pi ^{\sharp} (\beta)} \alpha - \delta (\pi(\alpha, \beta))$$
We denote this Lie algebroid by $T^*M |_{\pi}$.\\
(iv) Let $(M, \Lambda, E)$ be a Jacobi manifold, then $T^*M \times \mathbb {R}$ has a Lie algebroid structure $(T^* (M) \times \mathbb {R}, \textlbrackdbl ~ , ~ \textrbrackdbl_{(\Lambda, E)}, \rho_{(\Lambda, E)})$ \cite{1}, where the bracket and anchor is given by\\
$$ \textlbrackdbl (\alpha, f), (\beta, g) \textrbrackdbl_{(\Lambda, E)} = (\mathcal{L}_{\Lambda ^{\sharp} (\alpha)} \beta - \mathcal{L}_{\Lambda ^{\sharp} (\beta)} \alpha - \delta (\Lambda(\alpha, \beta))+ f \mathcal{L}_E {\beta}\\ - g \mathcal{L}_E {\alpha} - \iota_E (\alpha \wedge \beta) ,$$
$$\Lambda(\beta, \alpha) + \Lambda ^{\sharp} (\alpha)(g) - \Lambda ^{\sharp} (\beta) (f) + f E(g) - g E(f)) $$
and $$\rho_{(\Lambda, E)} (\alpha, f) = \Lambda ^{\sharp} (\alpha) + f E $$
for all $(\alpha, f), (\beta, g) \in \Gamma(T^*M \times \mathbb {R}) = \Omega ^1 (M) \times \mathbb {R}.$ When $E= 0$, this reduces to the Lie algebroid on $T^*M$
associated to the poisson manifold $(M, \Lambda)$ defined above. [example (iii)]\\
\end{exam}
  Given a Lie algebroid $(A, [~, ~], \rho)$, the exterior algebra of multisections of
A, $\Gamma (\wedge ^ {\bullet} A)$, together with the generalized
Schouten bracket, forms a Gerstenhaber algebra \cite{4}. Moreover $\Gamma(\wedge^{\bullet} A^*)$ together with the Lie algebroid differential $d$ forms a differential graded algebra, where the differential $d$ has the explicit formula similar to the Cartan differential formula\\
$$ (d \alpha)(X_0, X_1, \ldots, X_n) = \sum_{i=0}^n (-1)^{i} \rho (X_i) \alpha(X_0, .., \hat {X_i},.., X_n)$$
$$ + \sum_{i < j} (-1)^{i+j} \alpha ([X_i,X_j], X_0,.., \hat {X_i},.., \hat {X_j},.., X_n)$$
where $\alpha \in \Gamma(\wedge ^n A^*)$, and $X_0, \ldots, X_n \in \Gamma(A)$. When $A= TM$ is the usual tangent bundle Lie algebroid, denote the differential of the Lie algebroid (i.e, de Rham differential of the manifold) by $\delta$.\\

\begin{defn}\label{lie bialgebroid}
 A Lie bialgebroid is a pair $(A, A^*)$ of Lie algebroids in duality, where the differential $d_{*}$ on $\Gamma (\wedge^{\bullet} A)$ defined by the Lie algebroid structure of $A^*$ and the Gerstenhaber bracket on 
$\Gamma (\wedge^{\bullet} A)$ defined by the Lie algebroid structure of $A$ satisfies\\
$$ d_{*} [X, Y] = [d_{*} X , Y] + [X, d_{*}Y] $$
for all $X, Y \in \Gamma A.$

\end{defn}

\begin{exam}
(i) Any Lie bialgebra is a Lie bialgebroid over a point.\\
(ii) Let $(M, \pi)$ be Poisson manifold, the $(TM, T^{*} M)$ is a lie bialgebroid,
where $T M$ is the usual tangent Lie algebroid and $T^{*} M$ is the Lie algebroid given in the
example 2.4 (iii).\\
(iii) Let $(M, \pi, N)$ is a Poisson-Nijenhuis manifold, that is $\pi$ is a Poisson structure on $M$ and $N$ is a Nijenhuis operator on $M$ which are compatible, then $(TM |_{N}, T^*M |_{\pi})$ forms a Lie bialgebroid, where  $TM |_{N}$ is the Lie algebroid structure on the tangent bundle deformed by the Nijenhuis operator $N$, and $T^*M |_{\pi}$ is the Lie algebroid structure on the cotangent bundle induced from the Poisson structure. (see \cite{6} for more details).

\end{exam}

\section{Cartan calculas on Lie algebroid in the presence of 1-cocycle}

{\bf Identification}:

Let $A \rightarrow M$ be a smooth vector bundle over $M$. Then one can identify the smooth sections of the vector bundle $\wedge^r(A\times \mathbb R) $ with $\Gamma(\wedge^rA) \oplus \Gamma (\wedge^{r-1}A)$, and smooth sections of
$\wedge^k(A^\ast\times \mathbb R)$ with $\Gamma(\wedge^kA^\ast) \oplus \Gamma (\wedge^{k-1}A^\ast)$ such that\\
$$(P, Q)((\alpha_1, f_1), \ldots , (\alpha_r, f_r)) = P(\alpha_1, \ldots, \alpha_r) + \sum_{i=1}^r (-1)^{i+1}f_iQ(\alpha_1, \ldots,\hat{\alpha}_i, \ldots, \alpha_r),$$ 
$$(\alpha, \beta)((X_1, g_1), \ldots , (X_k, g_k)) = \alpha(X_1, \ldots, X_k) + \sum_{j=1}^k (-1)^{j+1}g_j\beta(X_1, \ldots,\hat{X}_j, \ldots, X_k),$$
for $(P, Q) \in \Gamma(\wedge^rA) \oplus \Gamma (\wedge^{r-1}A), (\alpha ,\beta) \in \Gamma(\wedge^kA^\ast) \oplus \Gamma (\wedge^{k-1}A^\ast),$ $(\alpha_i, f_i) \in \Gamma(A^\ast) \oplus C^{\infty}(M, \mathbb R)$ and $(X_j, g_j) \in \Gamma (A)\oplus C^{\infty}(M, \mathbb R)$.

Under the above identifications, the contraction and the exterior product are given by

$$ \iota_{(\alpha, \beta)}(P, Q) = (\iota_{\alpha}P + \iota_{\beta}Q, (-1)^k \iota_{\alpha}Q) $$ 
$$(P, Q) \wedge (P', Q') = (P\wedge P', Q\wedge P' + (-1)^r P\wedge Q'),$$
for $(P', Q') \in \Gamma\wedge^{r'}A \oplus \Gamma{\wedge^{r'-1}A}$.

Now for a Lie algebroid $(A, [~,~], \rho)$ over $M$, a Lie algebroid structure on $A \times \mathbb R$ is described as  
$$[(X,f),(Y,g)] = ([X, Y], \rho(X)g - \rho(Y)f),~~\tilde{\rho}(X,f) = \rho(X),$$ where $X,~Y \in \Gamma A,~~ f,~g \in C^{\infty}(M).$ With the above identifications, the coboundary operator $\tilde{d}$ of the cohomology of $A \times \R $  is given by 
$$\tilde{d} (\alpha, \beta) = (d\alpha, -d\beta),~~ (\alpha, \beta) \in \Gamma {\wedge^k(A^\ast \times \mathbb R)}\cong \Gamma (\wedge^kA^\ast) \oplus \Gamma (\wedge^{k-1}A^\ast).$$\\


{\bf Cartan calculas in presence of 1-cocycle}:

Let $(A, [~,~], \rho)$ be a Lie algebroid over $M$ and let $\phi \in \Gamma A^\ast$ be a $1$-cocycle. Then 
$$\phi([X, Y]) = \rho (X)(\phi(Y)) -\rho(Y) (\phi (X)), ~~ \mbox{for all}~~ X, Y \in \Gamma A.$$
Hence, one has a representation 
$$\rho^{\phi} : \Gamma A \times C^{\infty}(M) \longrightarrow C^{\infty}(M)$$ 
of $A$ on the trivial line bundle twisted by $\phi$ given by
$$\rho^{\phi}(X)f = \rho(X)f + \phi(X)f,~~ \mbox{for}~~ X\in \Gamma A,~ f \in  C^{\infty}(M).$$ 
The coboundary operator of $A$ associated to this representation is denoted by $d^{\phi}$ is called $\phi$- deformed differential and is related to $d$  by the formula 
$$ d^{\phi}\alpha = d \alpha + \phi\wedge \alpha$$

Using the $\phi$-deformed differential and Cartan's formula, one defines the $\phi$-deformed Lie derivative $\mathcal L^{\phi}_X : \Gamma {\wedge^{p} A^{\ast}} \rightarrow \Gamma {\wedge^{p} A^\ast}$ by
$$ \mathcal L^{\phi}_X (\alpha) = d^{\phi}(\iota_X\alpha) + \iota_X(d^{\phi}\alpha), ~~ \mbox{for}~~ X \in \Gamma A,~ \alpha \in  \Gamma {\wedge^pA^{\ast}}.$$

\begin{prop} Then we have:\\
(i) $d^{\phi} (\alpha \wedge \beta) = d^{\phi}\alpha \wedge \beta + (-1)^{|\alpha|} \alpha \wedge d^{\phi}\beta - \phi \wedge \alpha \wedge \beta$\\
(ii) $\mathcal L^{\phi}_X f\alpha = f \mathcal L^{\phi}_X \alpha + \rho(X)f \alpha$\\
(iii) $ \mathcal L^{\phi}_{fX} \alpha = f L^{\phi}_X \alpha + df \wedge \iota_X \alpha $\\
(iv) $\mathcal L^{\phi}_X (\alpha \wedge \beta) = \mathcal L^{\phi}_X (\alpha) \wedge \beta + \alpha \wedge \mathcal L^{\phi}_X (\beta) - \phi (X) \alpha \wedge \beta $
\end{prop}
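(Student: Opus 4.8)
The plan is to reduce everything to the already-established formula $d^{\phi}\alpha = d\alpha + \phi\wedge\alpha$ together with the classical (undeformed) Cartan calculus: the graded Leibniz rule for $d$, the magic formula $\mathcal{L}_X = d\iota_X + \iota_X d$, and the contraction-Leibniz rule for $\iota_X$. The whole proposition then collapses to one auxiliary identity plus routine sign-bookkeeping.

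For part (i) I would substitute $d^{\phi} = d + \phi\wedge(\cdot)$ into both sides. The graded Leibniz rule $d(\alpha\wedge\beta) = d\alpha\wedge\beta + (-1)^{|\alpha|}\alpha\wedge d\beta$ handles the $d$-part exactly, and the only subtle point is the sign identity $\alpha\wedge\phi = (-1)^{|\alpha|}\phi\wedge\alpha$ coming from the degree-$1$ form $\phi$. Tracking this sign through the term $(-1)^{|\alpha|}\alpha\wedge d^{\phi}\beta$ produces a $\phi\wedge\alpha\wedge\beta$ contribution, which is precisely what the correction term $-\phi\wedge\alpha\wedge\beta$ is designed to cancel, leaving the single $\phi\wedge\alpha\wedge\beta$ that appears in $d^{\phi}(\alpha\wedge\beta)$.

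The decisive step is the auxiliary identity $\mathcal{L}^{\phi}_X\alpha = \mathcal{L}_X\alpha + \phi(X)\alpha$. I would derive it by expanding the definition $\mathcal{L}^{\phi}_X\alpha = d^{\phi}(\iota_X\alpha) + \iota_X(d^{\phi}\alpha)$, writing $d^{\phi} = d + \phi\wedge(\cdot)$, recognizing $d(\iota_X\alpha) + \iota_X(d\alpha) = \mathcal{L}_X\alpha$, and simplifying the leftover $\phi\wedge\iota_X\alpha + \iota_X(\phi\wedge\alpha)$ by means of $\iota_X(\phi\wedge\alpha) = \phi(X)\alpha - \phi\wedge\iota_X\alpha$; the two $\phi\wedge\iota_X\alpha$ terms cancel and leave exactly $\phi(X)\alpha$. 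This is the point where the signs must be handled most carefully, and it is the main obstacle — everything else is mechanical once it is in place.

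With this identity in hand, parts (ii), (iii), (iv) follow by transporting the corresponding classical statements through $\mathcal{L}^{\phi}_X = \mathcal{L}_X + \phi(X)$. For (ii) I would combine $\mathcal{L}_X(f\alpha) = (\rho(X)f)\alpha + f\mathcal{L}_X\alpha$ with $\phi(X)(f\alpha) = f\,\phi(X)\alpha$; for (iii) the classical $\mathcal{L}_{fX}\alpha = f\mathcal{L}_X\alpha + df\wedge\iota_X\alpha$ with $\phi(fX) = f\phi(X)$; and for (iv) the derivation property $\mathcal{L}_X(\alpha\wedge\beta) = \mathcal{L}_X\alpha\wedge\beta + \alpha\wedge\mathcal{L}_X\beta$. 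In the last case the single factor $\phi(X)\,\alpha\wedge\beta$ arising on the left must be matched against the two copies of it generated by $\mathcal{L}^{\phi}_X\alpha\wedge\beta + \alpha\wedge\mathcal{L}^{\phi}_X\beta$, which is exactly what forces the subtracted term $-\phi(X)\,\alpha\wedge\beta$ in the stated formula.
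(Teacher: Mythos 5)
Your proposal is correct, and in fact the paper offers no proof of this proposition at all: it is stated as part of the background Cartan calculus (the deformed bracket results that follow it are attributed to the literature), so there is no in-paper argument to compare against. Your argument fills that gap cleanly and is the standard one. The decisive point is exactly the auxiliary identity $\mathcal{L}^{\phi}_X\alpha = \mathcal{L}_X\alpha + \phi(X)\,\alpha$, whose derivation you set up correctly: writing $d^{\phi} = d + \phi\wedge(\cdot)$ in the definition $\mathcal{L}^{\phi}_X\alpha = d^{\phi}(\iota_X\alpha) + \iota_X(d^{\phi}\alpha)$ and using that $\iota_X$ is a degree $-1$ derivation, so $\iota_X(\phi\wedge\alpha) = \phi(X)\,\alpha - \phi\wedge\iota_X\alpha$, the two $\phi\wedge\iota_X\alpha$ terms cancel and the classical magic formula gives the rest. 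With that identity, (ii), (iii), (iv) reduce to the undeformed statements (which hold verbatim for Lie algebroid Lie derivatives, e.g. $\mathcal{L}_{fX}\alpha = f\mathcal{L}_X\alpha + df\wedge\iota_X\alpha$ follows from $\iota_{fX} = f\iota_X$ and the Leibniz rule for $d$), and the bookkeeping of the extra $\phi(X)$ terms is exactly as you describe; likewise in (i) the sign computation $(-1)^{|\alpha|}\alpha\wedge\phi = \phi\wedge\alpha$ accounts for the correction term $-\phi\wedge\alpha\wedge\beta$. All four parts check out.
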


One also has the $\phi$-deformed Schouten bracket on the space of multisection of $A$, defined by
$$[P,P']^{\phi} = [P, P'] + (r-1) P\wedge \iota_{\phi}P' - (-1)^{r-1}(r'-1)(\iota_{\phi}P)\wedge P',$$ where
$P\in \Gamma{\wedge^rA}$, $P' \in \Gamma{\wedge^{r'}} A.$ 
\begin{thm}(\cite{7})
Let $(A, [~,~], \rho)$ be a Lie algebroid and $\phi \in \Gamma A^\ast$ be a $1$-cocycle. The $\phi$-deformed Schouten bracket satisfies\\
(i) $[X, f]^{\phi} = (\rho^{\phi}(X))f$\\
(ii) $[X, Y]^{\phi} = [X, Y]$\\
(iii) $[P, P']^{\phi} = -(-1)^{(r-1)(r'-1)}[P', P]^{\phi}$\\
(iv) $[P, P'\wedge P'']^{\phi} = [P, P']^{\phi} \wedge P'' + (-1)^{(r-1)r'}P' \wedge [P, P'']^{\phi} - (\iota_{\phi}P)\wedge P' \wedge P''$\\
where $f \in C^{\infty}(M)$, $X,~ Y \in \Gamma A$, $P \in \Gamma {\wedge^r A}$, $P' \in \Gamma {\wedge^{r'}} A$, $P'' \in \Gamma {\wedge^{r''}} A.$
\end{thm}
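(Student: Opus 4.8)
The four identities are the $\phi$-deformed analogues of the Gerstenhaber algebra axioms for the ordinary Schouten bracket, so the plan is to expand the defining formula
$$[P,P']^{\phi} = [P, P'] + (r-1)\, P\wedge \iota_{\phi}P' - (-1)^{r-1}(r'-1)\,(\iota_{\phi}P)\wedge P'$$
and reduce every statement to three ingredients: the Gerstenhaber structure of $(\Gamma(\wedge^{\bullet} A), \wedge, [~,~])$ recalled in Section 2 (graded antisymmetry and the graded Leibniz rule for the undeformed bracket), the fact that contraction $\iota_{\phi}$ by the $1$-form $\phi$ is a degree $-1$ antiderivation of $\wedge$, and graded-commutativity $P\wedge Q = (-1)^{rs}Q\wedge P$. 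No analytic input is needed; the whole content is bilinear algebra together with sign bookkeeping.

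For (i), put $P=X$ (so $r=1$) and $P'=f$ (so $r'=0$). The middle term carries the factor $r-1=0$ and drops out, while the last term contributes $-(-1)^{0}(0-1)(\iota_{\phi}X)\wedge f = \phi(X)f$; combined with the undeformed value $[X,f]=\rho(X)f$ this yields $(\rho(X)+\phi(X))f = \rho^{\phi}(X)f$, which is exactly the definition of $\rho^{\phi}$. For (ii), both $X$ and $Y$ have degree $1$, so $r-1=r'-1=0$ and the two correction terms vanish identically, leaving $[X,Y]^{\phi}=[X,Y]$.

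For (iii) I would start from the undeformed antisymmetry $[P,P']=-(-1)^{(r-1)(r'-1)}[P',P]$ and compare the two correction terms of $[P,P']^{\phi}$ against those of $-(-1)^{(r-1)(r'-1)}[P',P]^{\phi}$. Writing $\iota_{\phi}P'\in\Gamma(\wedge^{r'-1}A)$ and $\iota_{\phi}P\in\Gamma(\wedge^{r-1}A)$ and moving factors past one another by graded-commutativity, the term $(r-1)P\wedge\iota_{\phi}P'$ matches its partner because $(r-1)(r'-1)+(r'-1)\equiv r(r'-1) \pmod{2}$, and the two $(\iota_{\phi}P)$–$P'$ terms match because $(r-1)+(r-1)r'\equiv (r-1)(r'-1) \pmod{2}$. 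So (iii) is a short sign computation once antisymmetry of $[~,~]$ is granted.

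The real work is (iv). I would expand the left-hand side by applying the definition to the pair $(P,\,P'\wedge P'')$, noting $\deg(P'\wedge P'')=r'+r''$, and then use the undeformed Leibniz rule $[P,P'\wedge P'']=[P,P']\wedge P''+(-1)^{(r-1)r'}P'\wedge[P,P'']$ on the first summand and the antiderivation identity $\iota_{\phi}(P'\wedge P'')=(\iota_{\phi}P')\wedge P''+(-1)^{r'}P'\wedge(\iota_{\phi}P'')$ on the middle summand. Independently I would expand the right-hand side by inserting the definition into $[P,P']^{\phi}\wedge P''$ and $(-1)^{(r-1)r'}P'\wedge[P,P'']^{\phi}$. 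Reordering every wedge monomial into the normal forms $P\wedge(\iota_{\phi}P')\wedge P''$, $P\wedge P'\wedge(\iota_{\phi}P'')$ and $(\iota_{\phi}P)\wedge P'\wedge P''$ (which produces simplifications such as $(-1)^{(r-1)r'}(-1)^{rr'}\equiv(-1)^{r'}$ and $(-1)^{(r-1)r'}(-1)^{(r-1)r'}\equiv 1 \pmod{2}$), the undeformed terms cancel and the first two families match term by term. The main obstacle — and the step demanding the most care — is the $(\iota_{\phi}P)\wedge P'\wedge P''$ family: its coefficient on the left is a single factor built from $(r'+r''-1)$, whereas on the right it is assembled from three separate pieces carrying $(r'-1)$, $(r''-1)$ and the explicit correction term of (iv). Reconciling these coefficients is precisely where the degree factors and the sign $(-1)^{r-1}$ must conspire, and it is the only place where a discrepancy in sign convention would surface; once that coefficient match is verified, (iv) follows.
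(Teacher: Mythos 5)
You should first note that the paper offers no proof of this theorem at all: it is stated with a bare citation to reference [7], so there is no in-paper argument to compare against, and your direct-expansion strategy is the natural (and presumably the cited) one. Your parts (i), (ii) and (iii) are correct and complete: the degree factors $r-1$, $r'-1$ kill the correction terms in (i) and (ii), the surviving term in (i) is $+\phi(X)f$ as you say, and the parity bookkeeping in your (iii) checks out for both families of correction terms.

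The genuine gap is in (iv), and it sits exactly at the step you declined to carry out. You reduce (iv) to matching the coefficient of $(\iota_{\phi}P)\wedge P'\wedge P''$ and then assert that ``once that coefficient match is verified, (iv) follows.'' It does not: if you carry out the match, it fails. With the paper's definition, the left-hand side contributes
\begin{equation*}
-(-1)^{r-1}\bigl(r'+r''-1\bigr),
\end{equation*}
since $P'\wedge P''$ has degree $r'+r''$, while on the right-hand side $[P,P']^{\phi}\wedge P''$ contributes $-(-1)^{r-1}(r'-1)$, the term $(-1)^{(r-1)r'}P'\wedge[P,P'']^{\phi}$ contributes $-(-1)^{r-1}(r''-1)$ (the prefactor $(-1)^{(r-1)r'}$ cancels against the graded-commutativity sign $(-1)^{r'(r-1)}$ needed to move $P'$ past $\iota_{\phi}P$), and the explicit correction term contributes $-1$. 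Equality of
\begin{equation*}
-(-1)^{r-1}(r'+r''-1) \quad\text{and}\quad -(-1)^{r-1}\bigl[(r'-1)+(r''-1)\bigr]-1
\end{equation*}
forces $(-1)^{r-1}=1$, so (iv) as printed holds only for odd $r$; for even $r$ it is false. Concretely, taking $P\in\Gamma\wedge^{2}A$ and $P'=f$, $P''=g$ functions, the two sides differ by $2fg\,\iota_{\phi}P$. The identity consistent with the printed definition of $[\cdot,\cdot]^{\phi}$ (the same definition under which your proof of (iii) is valid) must carry the extra sign in the last term,
\begin{equation*}
[P, P'\wedge P'']^{\phi} = [P,P']^{\phi}\wedge P'' + (-1)^{(r-1)r'}P'\wedge[P,P'']^{\phi} - (-1)^{r-1}(\iota_{\phi}P)\wedge P'\wedge P'',
\end{equation*}
which is the generalized Leibniz rule of a graded Jacobi bracket: the correction term is $-[P,1]^{\phi}\wedge P'\wedge P''$, and from the definition one computes $[P,1]^{\phi}=(-1)^{r-1}\iota_{\phi}P$, not $\iota_{\phi}P$. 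So a complete proof of (iv) must perform the coefficient computation you postponed, and performing it honestly shows that either the statement of (iv) or the definition of the deformed bracket needs a sign amendment; your proposal, by deferring the check and asserting that it closes, hides precisely the point where the stated theorem is inconsistent.
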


Then one can define $\phi$-deformed Lie derivative $\mathcal{L}_X^{\phi} : \Gamma{\wedge^r A} \rightarrow \Gamma{\wedge^r A} $ by
$$\mathcal{L}_X^{\phi} P = [X, P]^{\phi} $$
which has the analogous properties of Lie derivaties on multisections twisted by $\phi$.

\section{Generalized Lie Bialgebroid and induced Jacobi structure}

Let $(A, [.,.], \rho)$ be a Lie algebroid over $M$ and $\phi_0 \in \Gamma(A^*)$ a 1-cocyle. Assume that the dual bundle $A^*$ also admits a Lie algebroid
structure $([.,.]_*, \rho_*)$ and $X_0 \in \Gamma(A)$ is its 1-cocyle. Denote the Lie derivative of the Lie algebroid $A$ (resp. $A^*$) is denoted by
$\mathcal{L}$ (resp. $\mathcal{L}_{*}$).
\begin{defn} \cite{1}
The pair $((A, \phi_0), (A^*, X_0))$ is said to be a generalized Lie bialgebroid over $M$ if for all $X, Y \in \Gamma(A)$ and $P \in \Gamma(\wedge^ p A)$
\begin{equation}
  d_* ^{X_0} [X,Y] = [d_* ^{X_0} X , Y]^{\phi_0} + [X , d_* ^{X_0} Y]^{\phi_0} 
\end{equation}
\begin{equation}
\mathcal{L}_{* \phi_0}^{X_0} P + \mathcal{L}^{\phi_0}_{X_0} P = 0 
\end{equation}
\end{defn}

\begin{remark}
(i)  The condition (2) of the above definition is equivalent to
\begin{equation}
 \phi_0 (X_0) = 0,        \rho(X_0) = - \rho_{*} (\phi_0)\\
\end{equation}
and 
\begin{equation}
\mathcal{L}_{* \phi_0} X + [X_0 , X] = 0  
\end{equation}
 
for all $ X \in \Gamma A .$ These follows from (2) by applying $ P = f \in C^\infty(M) $ and $ P = X \in \Gamma A $\\\\
(ii) When $\phi_0 = 0$ and $X_0 = 0,$ we recover the definition of Lie bialgebroid introduced by Mackenzie and Xu \cite{3}.\\

\end{remark}

\begin{exam}
Let $(M, \Lambda, E)$ be a Jacobi manifold, then the pair $((T M \times \mathbb{R}, (0,1)), (T^*M \times \mathbb{R}, (-E,0)))$ is a generalized Lie bialgebroid. \cite{1}\\
Another interesting example of generalized Lie bialgebroid is the one provided by strict Jacobi-Nijenhuis manifolds \cite{8}.\\
\end{exam}

Next we prove the self duality property of generalized Lie bialgebroid. The result will follow from the following sequence of lemmas.\\

 Let $((A, \phi_0), (A^*, X_0))$ be a generalized Lie bialgebroid over $M$, then for all $f, g \in C^\infty (M), X \in \Gamma(A)$ and  $\xi , \eta \in \Gamma(A^*),$ we have
\begin{lemma}\label{1}
\begin{equation}
  d_* ^{X_0} [X,f]^{\phi_0} = [d_* ^{X_0} X , f]^{\phi_0} + [X , d_* ^{X_0} f]^{\phi_0} \\
\end{equation}
\end{lemma}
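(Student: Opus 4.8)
The plan is to deduce the identity from the defining compatibility $(1)$ by a polarization argument, using the explicit form of the deformed operations and closing the computation with condition $(2)$ (equivalently conditions $(3)$--$(4)$). First I would record that $[X,f]^{\phi_0} = \rho^{\phi_0}(X)f = \rho(X)f + \phi_0(X)f$ by Theorem (i), so the left-hand side becomes $d_*^{X_0}$ of a function, computed via $d_*^{X_0}h = d_*h + hX_0$. On the right-hand side, $d_*^{X_0}f = d_*f + fX_0$ is a section, so $[X, d_*^{X_0}f]^{\phi_0} = [X, d_*^{X_0}f]$ collapses to the $A$-bracket of two sections by Theorem (ii), while $[d_*^{X_0}X, f]^{\phi_0}$ must be expanded through the deformed Schouten formula, producing a correction term in $\iota_{\phi_0}(d_*^{X_0}X)$.

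The heart of the argument is to manufacture the needed ``function version'' of the compatibility from $(1)$ itself. I would apply $(1)$ to the pair $(X, fY)$ for an arbitrary $Y \in \Gamma(A)$, expand $[X, fY] = (\rho(X)f)Y + f[X,Y]$, and distribute $d_*^{X_0}$ and $[\cdot,\cdot]^{\phi_0}$ using the graded Leibniz rule for $d_*^{X_0}$ (the analogue of Proposition (i), which gives $d_*^{X_0}(gZ) = (d_*g)\wedge Z + g\,d_*^{X_0}Z$) together with the Leibniz rule for the deformed Schouten bracket (Theorem (iv)). After collecting terms, every contribution still carrying $[X,Y]$, $d_*^{X_0}Y$, or $[\,\cdot\,, d_*^{X_0}Y]^{\phi_0}$ recombines into $f$ times the instance of $(1)$ for $(X,Y)$ and cancels. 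What survives is an identity all of whose terms contain a common factor $\wedge\, Y$; since $Y$ is arbitrary I can strip it off, obtaining a relation purely among $d_*(\rho(X)f)$, $[d_*^{X_0}X, f]^{\phi_0}$, $[X, d_*f]$ and $\iota_{\phi_0}(d_*^{X_0}X)$.

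Substituting this surviving relation back into the expanded target, the genuine Lie-algebroid terms cancel and the whole statement collapses to a single pointwise identity in the cocycles, namely $d_*^{X_0}(\phi_0(X)) + \iota_{\phi_0}(d_*^{X_0}X) = [X, X_0]$. This is exactly condition $(2)$ applied to $P = X$: by the deformed Cartan formula $\mathcal{L}_{*\phi_0}^{X_0}X = d_*^{X_0}(\iota_{\phi_0}X) + \iota_{\phi_0}(d_*^{X_0}X)$, and condition $(2)$ reads $\mathcal{L}_{*\phi_0}^{X_0}X + [X_0, X] = 0$, which is condition $(4)$. Along the way I would use condition $(3)$, $\phi_0(X_0) = 0$, to evaluate $\iota_{\phi_0}(X_0\wedge X) = -\phi_0(X)X_0$ when separating $d_*^{X_0}X = d_*X + X_0\wedge X$.

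I expect the main obstacle to be the sign bookkeeping. The deformed Schouten bracket carries interior-product correction terms whose signs depend on the degree of the first argument, so the terms coming from $[d_*^{X_0}X, \cdot]^{\phi_0}$ (degree two) must be treated more carefully than those from $[X, \cdot]^{\phi_0}$ (degree one); an uncritical application of the Leibniz rule can flip the sign of the $\iota_{\phi_0}(d_*^{X_0}X)$ term and spoil the final cancellation. The second delicate point is matching the leftover scalar identity precisely with condition $(2)$/$(4)$ through the deformed Cartan formula, which is where the \emph{generalized} (as opposed to ordinary) nature of the bialgebroid genuinely enters: for $\phi_0 = 0$ and $X_0 = 0$ the identity is automatic, whereas for nonzero cocycles it is exactly condition $(2)$ that forces it to hold.
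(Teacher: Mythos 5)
Your proposal follows essentially the same route as the paper: the paper's proof likewise polarizes the compatibility condition $(1)$ with the pair $(X, fY)$ for arbitrary $Y$, expands both sides via the Leibniz rules for $d_*^{X_0}$ and $[\cdot,\cdot]^{\phi_0}$, and invokes the subsidiary generalized Lie bialgebroid conditions to reach the intermediate identity $\mathcal{L}^{X_0}_{* d^{\phi_0}f} X = [X, d_*^{X_0} f]$, which it then unwinds with the deformed Cartan formula to obtain the lemma. Your closing step --- stripping the arbitrary $Y$ and reducing the residue to condition $(2)$ applied to $P = X$ (i.e.\ condition $(4)$) via the Cartan formula --- is exactly the content of that same intermediate identity, so the two arguments coincide in substance.
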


\begin{proof}
 Since for any arbitrary $ Y \in \Gamma(A) $, we have
$$ d_* ^{X_0} [X,fY] = [d_* ^{X_0} X , fY]^{\phi_0} + [X , d_* ^{X_0} (fY)]^{\phi_0} $$
Expand both sides using the derivation property of Lie algebroids, $\phi_0$-deformed bracket of $A$ and $X_0$-deformed differential of $A^*$. Now using the conditions of the generalized Lie 
bialgebroid (proposition 3.4 in \cite{1}), we have
$$ \mathcal{L} ^{X_0}_{* d^{\phi_0}f} X = [X, d^{X_0}_{*}f] $$
which is same as
$$ \iota_{d^{\phi_0}f} d^{X_0}_{*}X + d^{X_0}_{*} \iota_{d^{\phi_0}f} X = [X, d^{X_0}_{*}f] $$
$$ - [d^{X_0}_{*}X, f]^{\phi_0} + d^{X_0}_{*} \langle d^{\phi_0}f, X \rangle = [X, d^{X_0}_{*}f] $$
$$ d^{X_0}_{*} [X,f]^{\phi_0} = [d^{X_0}_{*}X, f]^{\phi_0} + [X, d^{X_0}_{*}f]^{\phi_0} $$

\end{proof}

\begin{lemma}\label{2}
\begin{equation}
 \mathcal{L} ^{X_0}_{* d ^{\phi_0}f} X + \mathcal{L} ^{\phi_0}_{d^{X_0}_{*}f} X = 0  
\end{equation}
\end{lemma}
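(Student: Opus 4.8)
The plan is to read (6) as nothing more than a repackaging of the one substantive identity that already drives the proof of Lemma~\ref{1}. There one extracts from the generalized Lie bialgebroid conditions (Proposition~3.4 of \cite{1}) the relation
$$\mathcal{L}^{X_0}_{*\,d^{\phi_0}f}X = [X, d^{X_0}_{*}f],\qquad f\in C^\infty(M),\ X\in\Gamma(A).$$
I would treat this as the sole nontrivial ingredient, leave the first summand of (6) in exactly this form, and reduce the whole statement to showing that the second summand $\mathcal{L}^{\phi_0}_{d^{X_0}_{*}f}X$ equals $-[X, d^{X_0}_{*}f]$.

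The second summand I would handle by purely formal bookkeeping. By the definition of the $\phi_0$-deformed Lie derivative on multisections, $\mathcal{L}^{\phi_0}_{Y}P = [Y,P]^{\phi_0}$, applied with $Y = d^{X_0}_{*}f$ and $P = X$, one obtains $\mathcal{L}^{\phi_0}_{d^{X_0}_{*}f}X = [d^{X_0}_{*}f, X]^{\phi_0}$. Both entries are genuine sections of $A$ --- indeed $d^{X_0}_{*}f = d_{*}f + fX_0$ with $d_{*}f\in\Gamma((A^*)^*)=\Gamma(A)$ --- so this is a $\phi_0$-deformed Schouten bracket of two $1$-vectors, which by Theorem~3.2(ii) collapses to the ordinary Lie algebroid bracket, $[d^{X_0}_{*}f, X]^{\phi_0} = [d^{X_0}_{*}f, X]$. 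Since that bracket is skew-symmetric on $\Gamma(A)$, it equals $-[X, d^{X_0}_{*}f]$. Substituting both summands then produces the cancellation
$$\mathcal{L}^{X_0}_{*\,d^{\phi_0}f}X + \mathcal{L}^{\phi_0}_{d^{X_0}_{*}f}X = [X, d^{X_0}_{*}f] - [X, d^{X_0}_{*}f] = 0,$$
which is precisely (6).

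I expect the main obstacle to be entirely concentrated in the identity $\mathcal{L}^{X_0}_{*\,d^{\phi_0}f}X = [X, d^{X_0}_{*}f]$; every step after it is a mechanical application of the definition of $\mathcal{L}^{\phi_0}$ and of Theorem~3.2(ii). If one wished to avoid quoting that identity and argue from scratch, the difficulty would be identical to the one already met in Lemma~\ref{1}: expand the left-hand side by Cartan's formula for the Lie algebroid $A^*$,
$$\mathcal{L}^{X_0}_{*\,d^{\phi_0}f}X = d^{X_0}_{*}\langle d^{\phi_0}f, X\rangle + \iota_{d^{\phi_0}f}\,d^{X_0}_{*}X,$$
and then recognize the right-hand side as $[X, d^{X_0}_{*}f]$ by invoking the defining conditions of the generalized Lie bialgebroid.
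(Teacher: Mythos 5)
Your proposal is correct and is essentially the paper's own argument: the paper proves Lemma~\ref{2} by unwinding the statement of Lemma~\ref{1} (expanding $[d_*^{X_0}X, f]^{\phi_0}$, using $[a,f] = -\iota_{df}a$, and reassembling Cartan's formula) back into the identity $\mathcal{L}^{X_0}_{*\,d^{\phi_0}f}X = [X, d_*^{X_0}f]$, which is exactly the identity you take as your single substantive input from the proof of Lemma~\ref{1}. Your only deviation is cosmetic: you dispose of the second summand in one stroke via $\mathcal{L}^{\phi_0}_{d_*^{X_0}f}X = [d_*^{X_0}f, X]^{\phi_0} = [d_*^{X_0}f, X] = -[X, d_*^{X_0}f]$ (definition of the deformed Lie derivative on multisections, Theorem~3.2(ii), and skew-symmetry), whereas the paper carries this term along as $[d_*^{X_0}f, X]^{\phi_0}$ through its chain of rearrangements and only identifies it as $\mathcal{L}^{\phi_0}_{d_*^{X_0}f}X$ at the final step.
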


\begin{proof}
 since $ d_* ^{X_0} [X,f]^{\phi_0} = [d_* ^{X_0} X , f]^{\phi_0} + [X , d_* ^{X_0} f]^{\phi_0} $\\
we have $[d_* ^{X_0} f , X]^{\phi_0} + d_* ^{X_0} [X,f]^{\phi_0} - [d_* ^{X_0} X , f]^{\phi_0} = 0 $\\
which is equivalent to
$$ [d_* ^{X_0} f , X]^{\phi_0} + d_* ^{X_0} [X,f]^{\phi_0} - [d_* ^{X_0} X , f] + f \iota_{\phi_0} d_* ^{X_0} X = 0 $$
$$ [d_* ^{X_0} f , X]^{\phi_0} + d_* ^{X_0} [X,f]^{\phi_0} + \iota_{df} d_* ^{X_0} X + f \iota_{\phi_0} d_* ^{X_0} X = 0 $$
(since, $[a, f] = - \iota_{df} a$ for any $ a \in \Gamma{\wedge^2A})$
$$ [d_* ^{X_0} f , X]^{\phi_0} + d_* ^{X_0} \iota_{d^{\phi_0} f} X + \iota_{d^{\phi_0} f} d_* ^{X_0} X = 0 $$
$$ \mathcal{L} ^{\phi_0}_{d_{*}^{X_0} f} X + \mathcal{L} ^{X_0}_{*{d ^{\phi_0}}f} X  = 0 $$
\end{proof}

\begin{lemma}\label{3}
\begin{equation}
 \langle d^{X_0}_{*} f, d^{\phi_0} g \rangle + \langle d^{\phi_0} f , d^{X_0}_{*} g \rangle = 0
\end{equation}
\end{lemma}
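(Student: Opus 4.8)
The plan is to read the statement, via the symmetry of the duality pairing, as the skew-symmetry of the single bilinear form $B(f,g):=\langle d_*^{X_0}f,\,d^{\phi_0}g\rangle$ (the candidate induced bracket on the base). Indeed the second summand is $\langle d^{\phi_0}f,\,d_*^{X_0}g\rangle=\langle d_*^{X_0}g,\,d^{\phi_0}f\rangle=B(g,f)$, so the assertion is precisely $B(f,g)+B(g,f)=0$. I would then split this sum into an \emph{anchor part} and a \emph{cocycle part} and dispose of each separately.

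First I would unwind the deformed differentials as $d_*^{X_0}f=d_*f+fX_0\in\Gamma(A)$ and $d^{\phi_0}g=dg+g\phi_0\in\Gamma(A^*)$, and evaluate them through the twisted anchors: $(d^{\phi_0}g)(Y)=\rho(Y)g+g\,\phi_0(Y)$ for $Y\in\Gamma(A)$, and $\langle\xi,d_*^{X_0}g\rangle=\rho_*(\xi)g+g\,\langle\xi,X_0\rangle$ for $\xi\in\Gamma(A^*)$. Expanding $B(f,g)+B(g,f)$ then produces
\[
\big(\rho(d_*^{X_0}f)g+\rho_*(d^{\phi_0}f)g\big)+g\big(\phi_0(d_*^{X_0}f)+\langle d^{\phi_0}f,X_0\rangle\big).
\]
The cocycle part is routine: using $\langle\phi_0,d_*f\rangle=\rho_*(\phi_0)f$, $(df)(X_0)=\rho(X_0)f$ and $\phi_0(X_0)=0$ it collapses to $g\,(\rho_*(\phi_0)+\rho(X_0))f$, which vanishes by the relation $\rho(X_0)=-\rho_*(\phi_0)$ (see~(3)).

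The whole weight of the lemma therefore sits in the anchor part, i.e. in the scalar identity $\rho(d_*^{X_0}f)g+\rho_*(d^{\phi_0}f)g=0$, and this is the step I expect to be the main obstacle: it is \emph{not} a formal consequence of $A$ and $A^*$ merely being Lie algebroids (in local coordinates it amounts to antisymmetry of the relevant anchor composition, which fails without compatibility), so the generalized Lie bialgebroid axioms must genuinely enter here. To extract it I would descend Lemma~\ref{2} from sections to functions. That lemma gives $\mathcal{L}^{X_0}_{* d^{\phi_0}f}X+\mathcal{L}^{\phi_0}_{d_*^{X_0}f}X=0$ for \emph{every} $X\in\Gamma(A)$, so I would apply it to $gX$ and invoke the derivation property of the deformed Lie derivatives, $\mathcal{L}^{\phi_0}_{Y}(gX)=g\,\mathcal{L}^{\phi_0}_{Y}X+(\rho(Y)g)\,X$ (Theorem 3.2(iv)) together with its $A^*$-analogue $\mathcal{L}^{X_0}_{*\xi}(gX)=g\,\mathcal{L}^{X_0}_{*\xi}X+(\rho_*(\xi)g)\,X$ (Proposition 3.1(ii) for $A^*$). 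The terms carrying the factor $g$ reassemble into $g$ times Lemma~\ref{2} for $X$ and hence vanish, leaving $\big(\rho(d_*^{X_0}f)g+\rho_*(d^{\phi_0}f)g\big)X=0$ for all $X\in\Gamma(A)$; since $X$ is arbitrary, the scalar coefficient must vanish, which is exactly the anchor identity.

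The one point to watch in this descent is that the module-derivation rules deliver the \emph{plain} anchors $\rho,\rho_*$ as the coefficients of $X$ (not the twisted $\rho^{\phi_0},\rho_*^{X_0}$); the twisting feeds only into the cocycle terms, which are precisely the ones already absorbed by~(3) in the second paragraph. Keeping the twisted and untwisted anchors apart is what lets the anchor part and the cocycle part cancel independently, and assembling the two gives $B(f,g)+B(g,f)=0$, as required.
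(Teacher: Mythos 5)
Your proposal is correct and takes essentially the same route as the paper: both apply Lemma \ref{2} to a product $gX$ (the paper's $gY$), use the derivation rules of the deformed Lie derivatives to cancel the $g$-terms and extract the scalar identity $\rho(d_{*}^{X_0}f)g+\rho_{*}(d^{\phi_0}f)g=0$ from the arbitrariness of $X$, and then dispose of the remaining cocycle terms via $\phi_0(X_0)=0$ and $\rho(X_0)=-\rho_{*}(\phi_0)$ from equation (3). The only difference is presentational: you split the pairing into anchor and cocycle parts at the outset, whereas the paper derives the anchor identity first and converts $dg$, $d_{*}g$ into the deformed differentials afterwards.
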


\begin{proof}
Choose any arbitrary $Y \in \Gamma(A)$, from (6) we have
$$ \mathcal{L} ^{\phi_0}_{d_{*}^{X_0} f} (gY) + \mathcal{L} ^{X_0}_{*{d ^{\phi_0}}f} (gY) - g ( \mathcal{L} ^{\phi_0}_{d_{*}^{X_0} f} Y + \mathcal{L} ^{X_0}_{*{d ^{\phi_0}}f} Y) = 0 $$
therefore, $ [{d_{*}^{X_0} f}, gY]^{\phi_0} + g \mathcal{L} ^{X_0}_{*{d ^{\phi_0}}f} Y + \rho_{*}(d^{\phi_0} f)g Y - g ( \mathcal{L} ^{\phi_0}_{d_{*}^{X_0} f} Y + \mathcal{L} ^{X_0}_{*{d ^{\phi_0}}f} Y ) = 0 $
$$ [{d_{*}^{X_0} f}, gY]^{\phi_0} + g \mathcal{L} ^{X_0}_{*{d ^{\phi_0}}f} Y + \rho_{*}(d^{\phi_0} f)g Y - g ( \mathcal{L} ^{\phi_0}_{d_{*}^{X_0} f} Y + \mathcal{L} ^{X_0}_{*{d ^{\phi_0}}f} Y ) = 0 $$
$$ g [{d_{*}^{X_0} f}, Y] + \rho(d_* ^{X_0} f)g Y + g \mathcal{L} ^{X_0}_{*{d ^{\phi_0}}f} Y + \rho_{*}(d^{\phi_0} f)g Y - g ( \mathcal{L} ^{\phi_0}_{d_{*}^{X_0} f} Y + \mathcal{L} ^{X_0}_{*{d ^{\phi_0}}f} Y) = 0 $$
since $Y$ is arbitrary, this implies
$$ \rho(d_* ^{X_0} f)g + \rho_{*}(d^{\phi_0} f)g = 0 $$
$$ \langle d_* ^{X_0} f , dg \rangle + \langle d^{\phi_0} f , d_{*}g \rangle = 0 $$
$$ \langle d_* ^{X_0} f , d^{\phi_0} g \rangle - \langle d_* ^{X_0} f , g \phi_0 \rangle  + \langle d^{\phi_0} f , d_{*}^{X_0} g \rangle - \langle d^{\phi_0} f , g X_0 \rangle = 0 $$
$$ \langle d_* ^{X_0} f , d^{\phi_0} g \rangle + \langle d^{\phi_0} f , d_{*}^{X_0} g \rangle - g \langle d_{*}f, \phi_0 \rangle - g \langle df, X_0 \rangle = 0 $$
                   $$(\because \langle \phi_0 , X_0 \rangle = 0)$$
that is, $ \langle d_* ^{X_0} f , d^{\phi_0} g \rangle + \langle d^{\phi_0} f , d_{*}^{X_0} g \rangle = 0 ~~ (\because \rho(X_0) = - \rho_{*}(\phi_0))$

\end{proof}

\begin{lemma}\label{4}
\begin{equation}
 \mathcal{L} ^{X_0}_{*{d ^{\phi_0}}f} \xi + \mathcal{L} ^{\phi_0}_{d^{X_0}_{*}f} \xi = 0
\end{equation}
\end{lemma}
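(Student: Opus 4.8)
The plan is to prove the identity by pairing it against an arbitrary $Y \in \Gamma(A)$ and then appealing to non-degeneracy of the fibrewise duality pairing between $A$ and $A^*$. Set $\alpha = d_*^{X_0}$ aside and abbreviate $\alpha = d^{\phi_0}f \in \Gamma(A^*)$ and $V = d_*^{X_0}f \in \Gamma(A)$, so that the assertion reads $\mathcal{L}^{X_0}_{*\alpha}\xi + \mathcal{L}^{\phi_0}_V\xi = 0$. The preliminary step I would carry out is to record, for each of the two algebroid structures, the compatibility of the deformed Lie derivative with the pairing. Writing $\rho^{\phi_0}(V) = \rho(V) + \phi_0(V)$ and $\rho_*^{X_0}(\alpha) = \rho_*(\alpha) + \langle X_0,\alpha\rangle$ for the two twisted anchors, the identities I want are
\begin{gather*}
\langle \mathcal{L}^{\phi_0}_V\xi, Y\rangle + \langle\xi, \mathcal{L}^{\phi_0}_V Y\rangle = \rho^{\phi_0}(V)\langle\xi, Y\rangle, \\
\langle \mathcal{L}^{X_0}_{*\alpha}\xi, Y\rangle + \langle\xi, \mathcal{L}^{X_0}_{*\alpha}Y\rangle = \rho_*^{X_0}(\alpha)\langle\xi, Y\rangle.
\end{gather*}
Each follows from the ordinary Cartan-calculus compatibility $\rho(V)\langle\xi,Y\rangle = \langle\mathcal{L}_V\xi, Y\rangle + \langle\xi, [V,Y]\rangle$ together with the elementary relation $\mathcal{L}^{\psi}_s\mu = \mathcal{L}_s\mu + \psi(s)\mu$ (a direct consequence of $d^{\psi} = d + \psi\wedge(\cdot)$ and the Cartan formula), and the fact that on sections the deformed bracket coincides with the undeformed one, so that $\mathcal{L}^{\psi}_s t = [s,t]^{\psi} = [s,t]$. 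Note that the first identity is computed on the \emph{form} side of $A$ while the second is computed on the \emph{multisection} side of $A^*$.

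Next I would add the two identities and pair the result with $Y$. By Lemma~\ref{2} applied with $X = Y$ we have $\mathcal{L}^{X_0}_{*\alpha}Y + \mathcal{L}^{\phi_0}_V Y = 0$, so the two terms in which the Lie derivatives act on $Y$ cancel, leaving
$$\langle \mathcal{L}^{X_0}_{*\alpha}\xi + \mathcal{L}^{\phi_0}_V\xi, Y\rangle = \big(\rho_*^{X_0}(\alpha) + \rho^{\phi_0}(V)\big)\langle\xi, Y\rangle.$$
It then remains to show that the first-order operator $\rho_*^{X_0}(\alpha) + \rho^{\phi_0}(V)$ annihilates every function. Its derivation part is $\rho(d_*^{X_0}f) + \rho_*(d^{\phi_0}f)$, which vanishes by the anchor-level identity produced inside the proof of Lemma~\ref{3}. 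Its scalar part is $\phi_0(d_*^{X_0}f) + \langle X_0, d^{\phi_0}f\rangle$; substituting $d^{\phi_0}f = df + f\phi_0$ and $d_*^{X_0}f = d_*f + fX_0$ and discarding the terms proportional to $\phi_0(X_0) = 0$, this collapses to $\rho_*(\phi_0)f + \rho(X_0)f$, which is zero by the relation $\rho(X_0) = -\rho_*(\phi_0)$. Hence the pairing vanishes for every $Y$, and non-degeneracy of the pairing gives $\mathcal{L}^{X_0}_{*\alpha}\xi + \mathcal{L}^{\phi_0}_V\xi = 0$, as required.

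I expect the main obstacle to be the preliminary step: correctly establishing the two deformed duality identities and tracking the $\phi_0$- and $X_0$-twisting corrections, precisely because $\mathcal{L}^{\phi_0}_V\xi$ is evaluated via Cartan's formula in $A$ whereas $\mathcal{L}^{X_0}_{*\alpha}\xi$ is evaluated via the $X_0$-deformed Schouten bracket in $A^*$, so the anchor and cocycle contributions enter asymmetrically. Once this bookkeeping is organized, and once one observes that the anchor-level cancellation needed is exactly what the proof of Lemma~\ref{3} yields, the statement reduces to assembling Lemmas~\ref{2} and \ref{3} with the cocycle conditions $\phi_0(X_0) = 0$ and $\rho(X_0) = -\rho_*(\phi_0)$.
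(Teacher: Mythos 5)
Your proof is correct and is essentially the paper's own argument: both rest on the Leibniz compatibility of the combined operator $\mathcal{L}^{X_0}_{*d^{\phi_0}f} + \mathcal{L}^{\phi_0}_{d_*^{X_0}f}$ with the pairing $\langle \xi, X\rangle$, cancel the term acting on sections of $A$ via Lemma~\ref{2}, show the operator annihilates functions (the paper cites Lemma~\ref{3} directly, while your split into anchor and scalar parts plus the cocycle relations $\phi_0(X_0)=0$, $\rho(X_0)=-\rho_*(\phi_0)$ just re-derives it), and conclude by non-degeneracy of the pairing. The only difference is presentational: you make explicit the twisted-anchor identities that the paper's displayed Leibniz formula leaves implicit.
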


\begin{proof}
 since for any function $g \in C^\infty(M)$, we have
\begin{align*}
  \mathcal{L} ^{X_0}_{*{d ^{\phi_0}}f} g + \mathcal{L} ^{\phi_0}_{d^{X_0}_{*}f} g  =& \iota_{d ^{\phi_0}f} d_{*}^{X_0} g + \iota_{d_* ^{X_0} f} d^{\phi_0} g \\
 =& \langle d ^{\phi_0}f , d_* ^{X_0} g \rangle + \langle d_* ^{X_0} f  , d^{\phi_0} g \rangle = 0 
\end{align*}
therefore the result follows from (6) and the identity
$$ (\mathcal{L} ^{X_0}_{*{d ^{\phi_0}}f} + \mathcal{L} ^{\phi_0}_{d^{X_0}_{*}f}) \langle \xi , X \rangle = \langle \mathcal{L} ^{X_0}_{*{d ^{\phi_0}}f} \xi + \mathcal{L} ^{\phi_0}_{d^{X_0}_{*}f} \xi , X \rangle +
\langle \xi , \mathcal{L} ^{X_0}_{*{d ^{\phi_0}}f} X + \mathcal{L} ^{\phi_0}_{d^{X_0}_{*}f} X \rangle $$
\end{proof}

\begin{lemma}\label{5}
\begin{equation}
 d^{\phi_0} [\xi, f]_{*}^{X_0} = [ d^{\phi_0} \xi , f]_{*}^{X_0} + [\xi, d^{\phi_0} f]_{*}^{X_0}
\end{equation}
\end{lemma}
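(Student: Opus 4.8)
The plan is to establish Lemma~\ref{5} as the exact dual of Lemma~\ref{1} under the exchange of the two legs of the generalized Lie bialgebroid, $(A,\phi_0)\leftrightarrow(A^*,X_0)$ and $X\leftrightarrow\xi$. In the proof of Lemma~\ref{1} the decisive input was the relation $\mathcal{L}^{X_0}_{*d^{\phi_0}f}X=[X,d_*^{X_0}f]$ taken from \cite{1}; the purpose of Lemmas~\ref{2}--\ref{4} is precisely to make its dual, $\mathcal{L}^{\phi_0}_{d_*^{X_0}f}\xi=[\xi,d^{\phi_0}f]_*^{X_0}$, available without any external input. Thus the first step is to extract this dual relation from Lemma~\ref{4}, and the second is to run it through the Cartan-calculus computation of Lemma~\ref{1} verbatim in dual form.

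For the first step, I would start from the identity of Lemma~\ref{4}, $\mathcal{L}^{X_0}_{*d^{\phi_0}f}\xi+\mathcal{L}^{\phi_0}_{d_*^{X_0}f}\xi=0$. Because $d^{\phi_0}f$ and $\xi$ are both sections of $A^*$, the $X_0$-deformed Lie derivative acting on the multisection $\xi$ coincides with the $X_0$-deformed Schouten bracket of $A^*$, so $\mathcal{L}^{X_0}_{*d^{\phi_0}f}\xi=[d^{\phi_0}f,\xi]_*^{X_0}$; by skew-symmetry of the deformed Schouten bracket (in degrees $r=r'=1$ the sign is $-1$) this equals $-[\xi,d^{\phi_0}f]_*^{X_0}$. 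Substituting back yields the dual intermediate identity $\mathcal{L}^{\phi_0}_{d_*^{X_0}f}\xi=[\xi,d^{\phi_0}f]_*^{X_0}$.

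For the second step I would expand the left-hand side by the definition of the $\phi_0$-deformed Lie derivative, $\mathcal{L}^{\phi_0}_{d_*^{X_0}f}\xi=d^{\phi_0}\iota_{d_*^{X_0}f}\xi+\iota_{d_*^{X_0}f}d^{\phi_0}\xi$, and simplify the two contractions separately. The degree-zero contraction is $\iota_{d_*^{X_0}f}\xi=\langle\xi,d_*^{X_0}f\rangle$, and since $d_*^{X_0}f=d_*f+fX_0$ and $\langle\xi,d_*f\rangle=\rho_*(\xi)f$, this is exactly $[\xi,f]_*^{X_0}=\rho_*(\xi)f+\langle\xi,X_0\rangle f$. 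The top contraction is the dual of the contraction identity hidden inside Lemma~\ref{1}: evaluating the $X_0$-deformed Schouten formula of $A^*$ on the $2$-section $d^{\phi_0}\xi$ and the function $f$, and using $[a,f]_*=-\iota_{d_*f}a$ for $a\in\Gamma(\wedge^2A^*)$, one gets $\iota_{d_*^{X_0}f}d^{\phi_0}\xi=-[d^{\phi_0}\xi,f]_*^{X_0}$. Inserting both contractions into the expanded identity and transposing the bracket term gives precisely $d^{\phi_0}[\xi,f]_*^{X_0}=[d^{\phi_0}\xi,f]_*^{X_0}+[\xi,d^{\phi_0}f]_*^{X_0}$, which is the assertion.

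I expect the only delicate point to be the sign and degree bookkeeping in the top contraction: one must check that the two correction terms $(r-1)\,P\wedge\iota_{X_0}P'$ and $-(-1)^{r-1}(r'-1)(\iota_{X_0}P)\wedge P'$ of the deformed Schouten bracket, specialised to $r=2$ and $r'=0$, combine $-\iota_{d_*f}d^{\phi_0}\xi-f\,\iota_{X_0}d^{\phi_0}\xi$ into $-\iota_{d_*^{X_0}f}d^{\phi_0}\xi$. All the genuine content sits in the first step: once $\mathcal{L}^{X_0}_{*d^{\phi_0}f}\xi$ is recognised as the deformed bracket $[d^{\phi_0}f,\xi]_*^{X_0}$, Lemma~\ref{4} delivers the dual of the relation used in Lemma~\ref{1}, after which the proof is a verbatim dualisation relying only on the Cartan calculus of Section~3.
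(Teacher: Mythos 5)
Your proposal is correct and follows essentially the same route as the paper: both proofs start from Lemma~\ref{4}, read $\mathcal{L}^{X_0}_{*d^{\phi_0}f}\xi$ as the deformed bracket $[d^{\phi_0}f,\xi]_*^{X_0}$, expand $\mathcal{L}^{\phi_0}_{d_*^{X_0}f}\xi$ by Cartan's formula, and use $\iota_{d_*^{X_0}f}\xi=[\xi,f]_*^{X_0}$ together with $\iota_{d_*^{X_0}f}d^{\phi_0}\xi=-[d^{\phi_0}\xi,f]_*^{X_0}$ to conclude. The only difference is organizational (you isolate the intermediate identity $\mathcal{L}^{\phi_0}_{d_*^{X_0}f}\xi=[\xi,d^{\phi_0}f]_*^{X_0}$ first, while the paper substitutes everything into the vanishing sum at once), and your sign bookkeeping is in fact cleaner than the paper's, whose final displayed lines contain typographical slips.
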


\begin{proof}
since we have $\mathcal{L} ^{X_0}_{*{d ^{\phi_0}}f} \xi + \mathcal{L} ^{\phi_0}_{d^{X_0}_{*}f} \xi = 0$\\
therefore, $ [d^{\phi_0}f, \xi]_{*}^{X_0} + d^{\phi_0} \iota_{d^{X_0}_{*}f} \xi + \iota_{d^{X_0}_{*}f} d^{\phi_0} \xi = 0 $\\
which is equivalent to
$$ [d^{\phi_0}f, \xi]_{*}^{X_0} + d^{\phi_0} [\xi, f]_{*}^{X_0} + \iota_{d_{*}f} d^{\phi_0} \xi + \iota_{fX_0} d^{\phi_0} \xi = 0 $$
$$ [d^{\phi_0}f, \xi]_{*}^{X_0} + d^{\phi_0} [\xi, f]_{*}^{X_0} - [d^{\phi_0} \xi, f]_{*} + f \iota_{X_0} d^{\phi_0} \xi = 0 $$
$$ [d^{\phi_0}f, \xi]_{*}^{X_0} + d^{\phi_0} [\xi, f]_{*}^{X_0} - [d^{\phi_0} \xi, f ]_{*}^{X_0} $$
$$ d^{\phi_0} [\xi, f]_{*}^{X_0} = [d^{\phi_0} \xi, f ]_{*}^{X_0} + [d^{\phi_0} \xi, f]_{*}^{X_0} $$

\end{proof}

\begin{lemma}\label{6}
\begin{equation}
 [\mathcal{L} ^{\phi_0}_{X} , \mathcal{L} ^{X_0}_{* \xi}] (f) - \mathcal{L} ^{X_0}_{* \mathcal{L} ^{\phi_0}_{X} \xi} f + \mathcal{L} ^{\phi_0}_{\mathcal{L} ^{X_0}_{* \xi} X} f = \mathcal{L} ^{\phi_0}_{d_* ^{X_0} \langle \xi , X \rangle} f 
\end{equation} 
\end{lemma}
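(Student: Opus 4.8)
The plan is to evaluate both sides on the single function $f$, where every deformed Lie derivative collapses to a first-order operator, and then to peel off the right-hand side using the Cartan magic formula, reducing (11) to a purely computational identity. First I would record the action on functions: for $W\in\Gamma A$ and $\zeta\in\Gamma A^{*}$ one has, from $\mathcal{L}^{\phi_0}_W P=[W,P]^{\phi_0}$ together with $[W,f]^{\phi_0}=\rho^{\phi_0}(W)f$ (and dually for $A^{*}$),
$$\mathcal{L}^{\phi_0}_W f=\rho(W)f+\phi_0(W)f=\langle d^{\phi_0}f,W\rangle,\qquad \mathcal{L}^{X_0}_{*\zeta}f=\rho_{*}(\zeta)f+\langle\zeta,X_0\rangle f=\langle d^{X_0}_{*}f,\zeta\rangle,$$
since the contraction of a function vanishes and $d^{\phi_0}f=df+f\phi_0$, $d^{X_0}_{*}f=d_{*}f+fX_0$. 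The same Cartan computation also yields the reductions $\mathcal{L}^{\phi_0}_X\xi=\mathcal{L}_X\xi+\phi_0(X)\,\xi$ and $\mathcal{L}^{X_0}_{*\xi}X=\mathcal{L}_{*\xi}X+\langle\xi,X_0\rangle\,X$, so the cocycles enter only through these scalar corrections.

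Next I would apply the Cartan formula $\mathcal{L}^{X_0}_{*\xi}X=\iota_\xi d^{X_0}_{*}X+d^{X_0}_{*}\langle\xi,X\rangle$ inside the third term on the left. Using additivity of $\mathcal{L}^{\phi_0}_{(\cdot)}$ in its subscript,
$$\mathcal{L}^{\phi_0}_{\mathcal{L}^{X_0}_{*\xi}X}f=\mathcal{L}^{\phi_0}_{\iota_\xi d^{X_0}_{*}X}f+\mathcal{L}^{\phi_0}_{d^{X_0}_{*}\langle\xi,X\rangle}f,$$
and the second summand is \emph{exactly} the right-hand side of (11). Hence (11) is equivalent to
$$[\mathcal{L}^{\phi_0}_X,\mathcal{L}^{X_0}_{*\xi}](f)-\mathcal{L}^{X_0}_{*\mathcal{L}^{\phi_0}_X\xi}f+\mathcal{L}^{\phi_0}_{\iota_\xi d^{X_0}_{*}X}f=0.$$

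To prove this I would expand the commutator with the two anchor formulas: writing $p=\phi_0(X)$ and $q=\langle\xi,X_0\rangle$, the mixed products $q\,\rho(X)f$, $p\,\rho_{*}(\xi)f$ and the scalar $pq\,f$ occur symmetrically in the two orders and cancel, leaving $[\rho(X),\rho_{*}(\xi)]f+(\rho(X)q-\rho_{*}(\xi)p)f$. I would then expand the other two terms using $\mathcal{L}^{\phi_0}_X\xi=\mathcal{L}_X\xi+p\,\xi$ and $\iota_\xi d^{X_0}_{*}X=\mathcal{L}_{*\xi}X-d_{*}\langle\xi,X\rangle+qX-\langle\xi,X\rangle X_0$, and sort the result into a \emph{principal part} $[\rho(X),\rho_{*}(\xi)]f-\rho_{*}(\mathcal{L}_X\xi)f+\rho(\mathcal{L}_{*\xi}X)f-\rho(d_{*}\langle\xi,X\rangle)f$ plus a collection of cocycle corrections. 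The correction $-\langle\mathcal{L}_X\xi,X_0\rangle f$ is rewritten, via the Leibniz rule for $\mathcal{L}_X$ and $[X,X_0]=\mathcal{L}_{*\phi_0}X$ from condition (4), as $-\rho(X)q\,f+\langle\xi,\mathcal{L}_{*\phi_0}X\rangle f$, which is how (4) enters; the remaining cross-anchor corrections are paired off using $\rho(X_0)=-\rho_{*}(\phi_0)$ and $\phi_0(X_0)=0$ from (3).

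The main obstacle is the matching of the principal part against the corrections. Because $(A,A^{*})$ with the cocycles dropped is \emph{not} a Lie bialgebroid, the principal part does not vanish by itself, and the genuine two-form contributions $\iota_X d\xi$ and $\iota_\xi d_{*}X$ hidden in $\mathcal{L}_X\xi$ and $\mathcal{L}_{*\xi}X$ must be controlled: the dangerous object is the mixed commutator $[\rho(X),\rho_{*}(\xi)]$, which is not the anchor of anything unless the defining compatibility is invoked. I expect this step to require the relation (1) and its consequences (Lemmas \ref{1} and \ref{2}) to rewrite the $d\xi$ and $d_{*}X$ pieces, after which the surviving symmetric remainder is killed by the skew relation $\rho(d_{*}f)g+\rho(d_{*}g)f=0$, the $\phi_0,X_0$-free content of Lemma \ref{3}. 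Everything else is routine Leibniz bookkeeping.
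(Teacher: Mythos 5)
Your preparatory steps are all correct: the splitting $\mathcal{L}^{X_0}_{*\xi}X=\iota_\xi d^{X_0}_{*}X+d^{X_0}_{*}\langle\xi,X\rangle$ is legitimate (the deformed Lie derivative on functions is additive in its subscript), and your expansions of the commutator and of $\iota_\xi d^{X_0}_{*}X$ are accurate. But the proposal stops exactly where the lemma begins. Everything you actually carry out is cocycle bookkeeping that uses no hypothesis on the pair $((A,\phi_0),(A^*,X_0))$; the reduced identity
\[
[\mathcal{L}^{\phi_0}_X,\mathcal{L}^{X_0}_{*\xi}](f)-\mathcal{L}^{X_0}_{*\mathcal{L}^{\phi_0}_X\xi}f+\mathcal{L}^{\phi_0}_{\iota_\xi d^{X_0}_{*}X}f=0
\]
\emph{is} the whole content, and for it you offer only an expectation (``I expect this step to require (1), Lemmas \ref{1}, \ref{2}, then Lemma \ref{3}''). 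That is a plan, not an argument, and its guiding diagnosis is partly wrong: the mixed commutator $[\rho(X),\rho_*(\xi)]$, which you single out as the dangerous object requiring compatibility, in fact cancels by pure Leibniz bookkeeping, since $\rho_*(\mathcal{L}_X\xi)f=\rho(X)\rho_*(\xi)f-\langle\xi,[X,d_*f]\rangle$ and $\rho(\mathcal{L}_{*\xi}X)f=\rho_*(\xi)\rho(X)f-\langle[\xi,df]_*,X\rangle$. Moreover, your toolkit omits Lemma \ref{4} (equation (8)), which is the only stated result controlling the $\mathcal{L}^{\phi_0}_{d^{X_0}_{*}f}\xi$-type (i.e.\ $d\xi$) terms, while the role you assign to Lemma \ref{3} (killing a ``symmetric remainder'' via $\rho(d_*f)g+\rho(d_*g)f=0$) is a guess that does not match how that lemma actually enters.

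For comparison, the paper never splits the third term: it proves two pure-Leibniz identities, $\mathcal{L}^{\phi_0}_X\mathcal{L}^{X_0}_{*\xi}f-\mathcal{L}^{X_0}_{*\mathcal{L}^{\phi_0}_X\xi}f=\langle\xi,[X,d^{X_0}_{*}f]^{\phi_0}\rangle$ and $\mathcal{L}^{X_0}_{*\xi}\mathcal{L}^{\phi_0}_Xf-\mathcal{L}^{\phi_0}_{\mathcal{L}^{X_0}_{*\xi}X}f=\langle[\xi,d^{\phi_0}f]^{X_0}_{*},X\rangle$, and then evaluates the right-hand side $\mathcal{L}^{\phi_0}_{d^{X_0}_{*}\langle\xi,X\rangle}f$ by applying Lemma \ref{3} with $g=\langle\xi,X\rangle$ and then Lemma \ref{4}, landing on the same two bracket terms. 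If you wish to keep your reduction, it can be closed more cheaply than you propose, and without Lemma \ref{3} at all: since
\[
\mathcal{L}^{\phi_0}_{\iota_\xi d^{X_0}_{*}X}f=\langle d^{\phi_0}f,\iota_\xi d^{X_0}_{*}X\rangle=-\langle\xi,\iota_{d^{\phi_0}f}d^{X_0}_{*}X\rangle,
\]
Lemma \ref{2} (equation (6)) in the form $\iota_{d^{\phi_0}f}d^{X_0}_{*}X=-[d^{X_0}_{*}f,X]^{\phi_0}-d^{X_0}_{*}\langle d^{\phi_0}f,X\rangle$ turns your third term into $\langle\xi,[d^{X_0}_{*}f,X]^{\phi_0}\rangle+\mathcal{L}^{X_0}_{*\xi}\mathcal{L}^{\phi_0}_Xf$, after which your reduced identity collapses to the first pure-Leibniz identity above. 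Until that (or an equivalent) chain is written out, what you have is correct preparation plus an unproven crux, not a proof.
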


\begin{proof}
 we first observe that,
\begin{align*}
 \mathcal{L} ^{X_0}_{* \mathcal{L} ^{\phi_0}_{X} \xi} f = \iota_{\mathcal{L} ^{\phi_0}_{X} \xi} d_{*}^{X_0} f =& \langle \mathcal{L} ^{\phi_0}_{X} \phi , d_{*}^{X_0} f \rangle \\
 =& \mathcal{L} ^{\phi_0}_{X} \langle \xi, d_{*}^{X_0} f \rangle - \langle \xi, \mathcal{L} ^{\phi_0}_{X} d_{*}^{X_0} f \rangle\\
 =& \mathcal{L} ^{\phi_0}_{X} \mathcal{L} ^{X_0}_{* \xi} f - \langle \xi, [X, d_{*}^{X_0} f]^{\phi_0} \rangle
\end{align*}
therefore, $\mathcal{L} ^{\phi_0}_{X} \mathcal{L} ^{X_0}_{* \xi} f - \mathcal{L} ^{X_0}_{* \mathcal{L} ^{\phi_0}_{X} \xi} f = \langle \xi, [X, d_{*}^{X_0} f]^{\phi_0} \rangle $\\
similarly one have, $ \mathcal{L} ^{X_0}_{* \xi} \mathcal{L} ^{\phi_0}_{X} f - \mathcal{L} ^{\phi_0}_{\mathcal{L} ^{X_0}_{* \xi} X} f = \langle [ \xi, d^{\phi_0}f]_{*}^{X_0} , X \rangle $\\
 By subtracting we get,
$$ [\mathcal{L} ^{\phi_0}_{X} , \mathcal{L} ^{X_0}_{* \xi}] f - \mathcal{L} ^{X_0}_{* \mathcal{L} ^{\phi_0}_{X} \xi} f + \mathcal{L} ^{\phi_0}_{\mathcal{L} ^{X_0}_{* \xi} X} f = 
 \langle \xi, [X, d_{*}^{X_0} f]^{\phi_0} \rangle - \langle [ \xi, d^{\phi_0}f]_{*}^{X_0} , X \rangle $$
on the other hand,
\begin{align*}
 \mathcal{L} ^{\phi_0}_{d_* ^{X_0} \langle \xi , X \rangle} f = \langle d_* ^{X_0} \langle \xi , X \rangle, d^{\phi_0} f \rangle =& - \langle d_* ^{X_0} f, d^{\phi_0} \langle \xi , X \rangle \rangle \\
 =& - \mathcal{L}^{\phi_0}_{d^{X_0}_{*} f} \langle \xi , X \rangle \\
 =&  - \langle \mathcal{L}^{\phi_0}_{d^{X_0}_{*} f} \xi , X \rangle - \langle \xi, \mathcal{L}^{\phi_0}_{d^{X_0}_{*} f} X \rangle \\
 =& \langle \mathcal{L}^{X_0}_{* d^{\phi_0}f} \xi, X \rangle - \langle \xi, [d^{X_0}_{*} f, X]^{\phi_0} \rangle \\
 =& - \langle [ \xi, d^{\phi_0}f]_{*}^{X_0} , X \rangle + \langle \xi, [X, d^{X_0}_{*} f]^{\phi_0} \\
\end{align*}
hence the result follows.

\end{proof}

Now we are in a position to state one of the main result of this paper:

\begin{prop}
\begin{equation}
 d^{\phi_0} [\xi, \eta ]_{*} = [ d^{\phi_0} \xi , \eta ]_{*}^{X_0} + [\xi, d^{\phi_0} \eta ]_{*}^{X_0}
\end{equation}
\end{prop}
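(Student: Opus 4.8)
The plan is to treat both sides of the asserted identity as sections of $\wedge^2 A^*$ and to prove their equality by contracting against an arbitrary pair $X, Y \in \Gamma(A)$; since a $2$-section of $A^*$ is completely determined by its values on pairs of sections of $A$, it is enough to verify the resulting scalar identity in $C^\infty(M)$. Before contracting I would first rewrite the right-hand side through deformed Lie derivatives. Using the graded skew-symmetry and the relation $\mathcal{L}_{*\xi}^{X_0}\Omega = [\xi,\Omega]_*^{X_0}$ recorded in Section~3, one has $[\xi, d^{\phi_0}\eta]_*^{X_0} = \mathcal{L}_{*\xi}^{X_0}(d^{\phi_0}\eta)$ and $[d^{\phi_0}\xi, \eta]_*^{X_0} = -\mathcal{L}_{*\eta}^{X_0}(d^{\phi_0}\xi)$, while the left-hand side is $d^{\phi_0}\mathcal{L}_{*\xi}^{X_0}\eta$ because $[\xi,\eta]_* = \mathcal{L}_{*\xi}^{X_0}\eta$. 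In this form the statement becomes the commutator identity $[d^{\phi_0}, \mathcal{L}_{*\xi}^{X_0}]\,\eta = -\,\mathcal{L}_{*\eta}^{X_0}(d^{\phi_0}\xi)$ in $\Gamma(\wedge^2 A^*)$, which is the shape I would actually verify.

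Next I would evaluate this reformulated identity on $X, Y \in \Gamma(A)$. Applying the Cartan-type expression for $d^{\phi_0}$ on a $1$-section and expanding each pairing such as $\langle [\xi,\eta]_*, X\rangle$ by the Leibniz rule for the $A^*$-Lie derivative reduces every term to an expression built from $\mathcal{L}^{\phi_0}$, $\mathcal{L}_*^{X_0}$, the two anchors, and scalar pairings, all ultimately acting on functions. Along the way the deformation-correction terms, namely the $\phi_0(X)$- and $X_0(\xi)$-contributions produced by the deformed Cartan calculus and by the Leibniz rule for the $X_0$-deformed Schouten bracket, have to be carried carefully; this is where the bulk of the computation sits.

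After this reduction the scalar identity that remains to be proved is exactly Lemma~\ref{6}: the commutator $[\mathcal{L}_X^{\phi_0}, \mathcal{L}_{*\xi}^{X_0}](f)$, corrected by the two coefficient-derivative terms $\mathcal{L}_{*\mathcal{L}_X^{\phi_0}\xi}^{X_0}f$ and $\mathcal{L}_{\mathcal{L}_{*\xi}^{X_0}X}^{\phi_0}f$, is forced to equal $\mathcal{L}_{d_*^{X_0}\langle\xi,X\rangle}^{\phi_0}f$. The remaining, lower-order pieces are supplied by the earlier lemmas: Lemma~\ref{3} gives the skew-compatibility of the two differentials on functions, and Lemmas~\ref{4} and~\ref{5} give the derivation property of $d^{\phi_0}$ along the $A^*$-bracket when one argument is a function, which controls precisely those terms in which $\xi$ or $\eta$ gets differentiated. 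Combining Lemmas~\ref{3}--\ref{6} then gives the scalar identity for all $X, Y$, and hence the proposition.

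The main obstacle I anticipate is not conceptual but organizational: bringing the $2$-section identity down to functions forces repeated use of the deformed Cartan formula and of the deformed Schouten Leibniz rule, and the many $\phi_0$- and $X_0$-correction terms must cancel in exactly the pattern recorded in Lemma~\ref{6}. The real content is the recognition that the fully contracted identity coincides with that lemma rather than yielding a strictly stronger relation; once this is seen, what remains is a disciplined accounting of the cancellations.
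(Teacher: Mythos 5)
Your reformulation of the statement as $\mathcal{L}^{X_0}_{*\xi} d^{\phi_0}\eta - \mathcal{L}^{X_0}_{*\eta} d^{\phi_0}\xi - d^{\phi_0}[\xi,\eta]_* = 0$, and your plan to verify it by contracting against arbitrary $X, Y \in \Gamma A$, is exactly the paper's strategy; your identification of Lemma \ref{6} (together with the skew-symmetry of Lemma \ref{3}) as what kills the commutator-type terms is also correct. But there is a genuine gap: you claim that after expansion everything is controlled by Lemmas \ref{3}--\ref{6}, and you never invoke the defining compatibility condition (1), namely $d_*^{X_0}[X,Y] = [d_*^{X_0}X, Y]^{\phi_0} + [X, d_*^{X_0}Y]^{\phi_0}$ for arbitrary sections $X, Y$. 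This condition cannot be dispensed with. When you expand $(d^{\phi_0}[\xi,\eta]_*)(X,Y)$ by the Cartan formula, the term $\langle [\xi,\eta]_*, [X,Y]\rangle$ appears, and rewriting it through the Leibniz rule for $\mathcal{L}^{X_0}_{*\xi}$ produces $\iota_\xi d_*^{X_0}[X,Y]$, i.e.\ the $X_0$-deformed differential of the bracket of the two arbitrary test sections, which no function-level lemma controls. The full expansion in fact yields the identity the paper cites from \cite{2}:
\begin{align*}
&\bigl(\mathcal{L}_X^{\phi_0} d_*^{X_0}Y - \mathcal{L}_Y^{\phi_0} d_*^{X_0}X - d_*^{X_0}[X,Y]\bigr)(\xi,\eta)
- \bigl(\mathcal{L}^{X_0}_{*\xi} d^{\phi_0}\eta - \mathcal{L}^{X_0}_{*\eta} d^{\phi_0}\xi - d^{\phi_0}[\xi,\eta]_*\bigr)(X,Y)\\
&\qquad = \text{(sum of four commutator expressions of the type appearing in Lemma \ref{6})},
\end{align*}
whose right-hand side vanishes by Lemmas \ref{3} and \ref{6}. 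So what the lemmas give you is not the proposition but the \emph{equality of the two compatibility tensors}: the tensor for $(A, A^*)$ evaluated on $(\xi,\eta)$ equals the tensor for $(A^*, A)$ evaluated on $(X,Y)$. To conclude that the latter vanishes you must use that the former vanishes, and that is precisely condition (1) of the definition of generalized Lie bialgebroid --- the paper's proof says exactly this when it notes that ``the first term of the left hand side'' vanishes because $((A,\phi_0),(A^*,X_0))$ is a generalized Lie bialgebroid.

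The deeper point is that Lemmas \ref{1}--\ref{6} are only the function-level shadows of the axioms (each is obtained by specializing one argument to $f \in C^\infty(M)$), and they are strictly weaker than condition (1); indeed, modulo those lemmas, the proposition is \emph{equivalent} to condition (1). So your closing remark --- that the fully contracted identity ``coincides with that lemma rather than yielding a strictly stronger relation'' --- is the opposite of the truth: the contracted identity is strictly stronger, and the extra strength must be supplied by hypothesis (1). Your argument becomes correct, and essentially identical to the paper's, once you add this one invocation.
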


\begin{proof}
 From (10) we have,
\begin{equation}
 ([\mathcal{L} ^{\phi_0}_{X} , \mathcal{L} ^{X_0}_{* \xi}] - \mathcal{L} ^{X_0}_{* \mathcal{L} ^{\phi_0}_{X} \xi} + \mathcal{L} ^{\phi_0}_{\mathcal{L} ^{X_0}_{* \xi} X}) \langle \eta , Y \rangle + 
([\mathcal{L} ^{\phi_0}_{Y} , \mathcal{L} ^{X_0}_{* \eta}] - \mathcal{L} ^{X_0}_{* \mathcal{L} ^{\phi_0}_{Y} \eta} + \mathcal{L} ^{\phi_0}_{\mathcal{L} ^{X_0}_{* \eta} Y}) \langle \xi, X \rangle \\
\end{equation}
 $$= \mathcal{L} ^{\phi_0}_{d_* ^{X_0} \langle \xi , X \rangle} \langle \eta, Y \rangle + \mathcal{L} ^{\phi_0}_{d_* ^{X_0} \langle \eta , Y \rangle} \langle \xi, X \rangle = 0 $$

Now, a direct calculation similar to \cite{2}, shows that\\

$ (\mathcal{L}_{X}^{\phi_0} d_{*}^{X_0} Y - \mathcal{L}_{Y}^{\phi_0} d_{*}^{X_0} X - d_{*}^{X_0} [X,Y] ) (\xi, \eta)
- (\mathcal{L}_{* \xi}^{X_0} d^{\phi_0} \eta - \mathcal{L}_{* \eta}^{X_0} d^{\phi_0} \xi - d^{\phi_0} [\xi, \eta]_{*}) (X, Y)$\\

$ = ([\mathcal{L} ^{\phi_0}_{X} , \mathcal{L} ^{X_0}_{* \xi}] - \mathcal{L} ^{X_0}_{* \mathcal{L} ^{\phi_0}_{X} \xi} + \mathcal{L} ^{\phi_0}_{\mathcal{L} ^{X_0}_{* \xi} X}) \langle \eta , Y \rangle + 
([\mathcal{L} ^{\phi_0}_{Y} , \mathcal{L} ^{X_0}_{* \eta}] - \mathcal{L} ^{X_0}_{* \mathcal{L} ^{\phi_0}_{Y} \eta} + \mathcal{L} ^{\phi_0}_{\mathcal{L} ^{X_0}_{* \eta} Y}) \langle \xi, X \rangle $
$$ - ([\mathcal{L} ^{\phi_0}_{X} , \mathcal{L} ^{X_0}_{* \eta}] - \mathcal{L} ^{X_0}_{* \mathcal{L} ^{\phi_0}_{X} \eta} + \mathcal{L} ^{\phi_0}_{\mathcal{L} ^{X_0}_{* \eta} X}) \langle \xi , Y \rangle - 
([\mathcal{L} ^{\phi_0}_{Y} , \mathcal{L} ^{X_0}_{* \xi}] - \mathcal{L} ^{X_0}_{* \mathcal{L} ^{\phi_0}_{Y} \xi} + \mathcal{L} ^{\phi_0}_{\mathcal{L} ^{X_0}_{* \xi} Y}) \langle \eta, X \rangle$$

since $((A, \phi_0), (A^{*}, X_0))$ is a generalized Lie bialgebroid, therefore the first term of the left hand side vanish, and the right hand side of the above equality is also vanish because of (12).\\
hence  $\mathcal{L}_{* \xi}^{X_0} d^{\phi_0} \eta - \mathcal{L}_{* \eta}^{X_0} d^{\phi_0} \xi - d^{\phi_0} [\xi, \eta]_{*} = 0 $\\
which is equivalent to $ d^{\phi_0} [\xi, \eta ]_{*} = [ d^{\phi_0} \xi , \eta ]_{*}^{X_0} + [\xi, d^{\phi_0} \eta ]_{*}^{X_0} $

\end{proof}

{\bf Observation}:\\
In a generalized Lie bialgebroid, we have
$$ \mathcal{L}_{* \phi_0 } X + [X_0 , X] =0, ~~ \forall  X \in \Gamma A . $$
that is, $\mathcal{L}_{* \phi_0 } X + \mathcal{L}_{X_0} X = 0 $\\
again since we have the identity
\begin{displaymath}
 (\mathcal{L}_{* \phi_0 } + \mathcal{L}_{X_0})\langle \xi, X \rangle = \langle \mathcal{L}_{* \phi_0 } \xi+ \mathcal{L}_{X_0} \xi , X \rangle + \langle \xi , \mathcal{L}_{* \phi_0 } X + \mathcal{L}_{X_0} X \rangle
\end{displaymath}
therefore we can conclude that,
$$ \mathcal{L}_{* \phi_0 } \xi+ \mathcal{L}_{X_0} \xi = 0 $$
(since $ \mathcal{L}_{* \phi_0 } f+ \mathcal{L}_{X_0} f = \rho_{*}(\phi_0) f + \rho (X_0) f = 0 ,$ for any $f \in C^\infty (M) $)\\
hence $ \mathcal{L}_{X_0} \xi + [\phi_0, \xi]_{*} = 0 ,$ for any $\xi \in \Gamma A^* .$\\

From the previous proposition, equation (3) and the observation above, we have the following:

\begin{thm}
 If $((A, \phi_0), (A^*, X_0))$ is a generalized Lie bialgebroid, then $((A^*, X_0), (A, \phi_0))$ is also a generalized Lie bialgebroid.
\end{thm}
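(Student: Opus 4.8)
The plan is to verify the two conditions of the definition of a generalized Lie bialgebroid for the pair $((A^*, X_0), (A, \phi_0))$, reading that definition with the roles of $A$ and $A^*$, and of the cocycles $\phi_0$ and $X_0$, interchanged. Under this interchange the first condition becomes
$$ d^{\phi_0}[\xi, \eta]_* = [d^{\phi_0}\xi, \eta]_*^{X_0} + [\xi, d^{\phi_0}\eta]_*^{X_0}, \qquad \xi, \eta \in \Gamma(A^*), $$
while the second becomes $\mathcal{L}_{X_0}^{\phi_0} P + \mathcal{L}_{*\phi_0}^{X_0} P = 0$ for all $P \in \Gamma(\wedge^\bullet A^*)$.

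The first of these is, verbatim, the statement of the previous proposition (equation (11)), which has already been established through the preceding sequence of lemmas. Hence nothing further is required for it.

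For the second condition I would proceed as in part (i) of the Remark following the definition, now applied to the dual pair: since the operator involved is (as explained below) a derivation of $\Gamma(\wedge^\bullet A^*)$, it suffices to check its vanishing on the generators, that is, on functions and on single sections. Testing on $f \in C^\infty(M)$ gives $(\rho(X_0) + \rho_*(\phi_0))f + 2\phi_0(X_0)f = 0$, which is exactly equation (3); since (3) is symmetric under interchange of the two algebroids it holds by hypothesis. Testing on $\xi \in \Gamma(A^*)$ reduces, once $\phi_0(X_0) = 0$ is used to absorb the deformation corrections, to $\mathcal{L}_{X_0}\xi + [\phi_0, \xi]_* = 0$, which is precisely the identity obtained in the Observation above.

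The only point requiring genuine care is the passage from generators to arbitrary multisections. By part (iv) of the Cartan-calculus proposition and part (iv) of the $\phi$-deformed Schouten bracket theorem, each of the two deformed Lie derivatives acts on a wedge product with a correction term, which in both cases equals $-\phi_0(X_0)$ times the wedge product. Since $\phi_0(X_0) = 0$ by (3), these two corrections cancel, so the combined operator $\mathcal{L}_{X_0}^{\phi_0} + \mathcal{L}_{*\phi_0}^{X_0}$ is a genuine derivation of $\Gamma(\wedge^\bullet A^*)$, and its vanishing on the generators forces vanishing on all of $\Gamma(\wedge^\bullet A^*)$. This cancellation is the step I would spell out most explicitly; with both conditions thereby verified, the pair $((A^*, X_0), (A, \phi_0))$ is a generalized Lie bialgebroid.
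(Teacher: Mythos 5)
Your proof is correct and takes essentially the same route as the paper: condition (1) for the dual pair is exactly the Proposition (equation (11)), and condition (2) is reduced to the symmetric identities of equation (3) together with $\mathcal{L}_{X_0}\xi + [\phi_0,\xi]_* = 0$, which is precisely the paper's Observation. Your explicit check that $\mathcal{L}^{\phi_0}_{X_0} + \mathcal{L}^{X_0}_{*\phi_0}$ becomes a derivation of $\Gamma(\wedge^\bullet A^*)$ once $\phi_0(X_0)=0$ kills the correction terms simply spells out the equivalence that the paper invokes implicitly via Remark 4.2(i).
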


 In the next we will prove that, there is a canonical Jacobi structure on the base of a generalized Lie bialgebroid.\\
 Let $((A, \phi_0), (A^*, X_0))$ be a generalized Lie bialgebroid. Define a bracket
 $$\{.,.\}: C^\infty (M) \times C^\infty (M) \rightarrow C^\infty (M)$$
by 
\begin{equation}
 \{f,g\} = \langle d^{\phi_0} f , d^{X_0}_{*} g \rangle
\end{equation}

\begin{prop}
 The bracket defined above satisfies,\\
\begin{equation}
 d^{\phi_0} \{f,g\}= [d^{\phi_0} f, d^{\phi_0} g]_{*}
\end{equation}
\begin{equation}
 d_{*}^ {X_0} \{f,g\} = - [d_{*}^ {X_0}f , d_{*}^ {X_0} g]
\end{equation}

\end{prop}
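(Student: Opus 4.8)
The plan is to prove identity (14) directly, and then to deduce (15) from it by invoking the self-duality Theorem rather than repeating the computation.

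First I would rewrite the bracket in a form adapted to the $A^{*}$-structure. Expanding $d_{*}^{X_0}g = d_{*}g + gX_0$ and using that $\langle \xi, d_{*}g\rangle = \rho_{*}(\xi)g$ for $\xi \in \Gamma A^{*}$, together with the cocycle relation $\phi_0(X_0)=0$ from (3), one verifies
$$\{f,g\} = \langle d^{\phi_0}f,\, d_{*}^{X_0}g\rangle = \rho_{*}^{X_0}(d^{\phi_0}f)\,g = [\,d^{\phi_0}f,\, g\,]_{*}^{X_0},$$
where the final equality is part (i) of the $\phi$-deformed Schouten bracket Theorem of Section 3, applied to the Lie algebroid $A^{*}$ with $1$-cocycle $X_0$. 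This presents $\{f,g\}$ as the $X_0$-deformed Schouten bracket of the section $d^{\phi_0}f \in \Gamma A^{*}$ with the function $g$, which is the crucial reformulation.

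Next I would apply $d^{\phi_0}$ and use Lemma \ref{5} with $\xi = d^{\phi_0}f$, giving
$$d^{\phi_0}\{f,g\} = d^{\phi_0}[\,d^{\phi_0}f,\, g\,]_{*}^{X_0} = [\,d^{\phi_0}(d^{\phi_0}f),\, g\,]_{*}^{X_0} + [\,d^{\phi_0}f,\, d^{\phi_0}g\,]_{*}^{X_0}.$$
Since $\phi_0$ is a $1$-cocycle, $d^{\phi_0}$ is the coboundary operator of the representation $\rho^{\phi_0}$, so $(d^{\phi_0})^2 = 0$ and the first term vanishes. Part (ii) of the same Theorem gives $[\xi,\eta]_{*}^{X_0} = [\xi,\eta]_{*}$ on sections, so the surviving term is $[d^{\phi_0}f,\, d^{\phi_0}g]_{*}$, which is exactly (14). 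To obtain (15), I would apply the now-established identity (14) to the dual generalized Lie bialgebroid $((A^{*},X_0),(A,\phi_0))$, which is a generalized Lie bialgebroid by the Theorem; its induced bracket is $\langle d_{*}^{X_0}f, d^{\phi_0}g\rangle = -\{f,g\}$ by Lemma \ref{3}, so (14) applied to it yields $d_{*}^{X_0}(-\{f,g\}) = [d_{*}^{X_0}f,\, d_{*}^{X_0}g]$, which rearranges to (15). Alternatively (15) admits a direct proof verbatim parallel to (14), using Lemma \ref{1} in place of Lemma \ref{5}.

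The step needing the most care is the reformulation $\{f,g\} = [d^{\phi_0}f, g]_{*}^{X_0}$: this is where the cocycle condition $\phi_0(X_0)=0$ enters, and one must track the interplay between the undeformed pairing $\langle\cdot,\cdot\rangle$, the two deformed differentials, and the deformed anchor $\rho_{*}^{X_0}$. After that, the remaining steps are formal consequences of Lemma \ref{5}, the square-zero property $(d^{\phi_0})^2=0$, and the $\phi$-deformed Schouten bracket Theorem.
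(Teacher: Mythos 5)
Your proof of (14) is essentially the paper's own: both of you rewrite $\{f,g\}=\rho_{*}^{X_0}(d^{\phi_0}f)\,g=[d^{\phi_0}f,g]_{*}^{X_0}$, apply Lemma \ref{5}, kill the term $[d^{\phi_0}d^{\phi_0}f,g]_{*}^{X_0}$ via $(d^{\phi_0})^{2}=0$, and drop the deformation on the resulting bracket of two $1$-sections using part (ii) of the deformed Schouten bracket theorem of Section 3 (the paper leaves these last two ingredients implicit; you make them explicit, which is a virtue). One small correction there: the identity $\langle d^{\phi_0}f,d_{*}^{X_0}g\rangle=\rho_{*}^{X_0}(d^{\phi_0}f)g$ follows from $d_{*}^{X_0}g=d_{*}g+gX_0$ and $\langle\xi,d_{*}g\rangle=\rho_{*}(\xi)g$ alone; the cocycle relation $\phi_0(X_0)=0$ is not actually used, so flagging it as the place where that condition enters is inaccurate, though harmless. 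Where you genuinely diverge from the paper is (15): the paper proves it by the exactly symmetric computation, writing $\{f,g\}=\rho^{\phi_0}(d_{*}^{X_0}g)f=[d_{*}^{X_0}g,f]^{\phi_0}$ and applying Lemma \ref{1} (plus skew-symmetry, which produces the minus sign), whereas you transport (14) across the duality $((A^{*},X_0),(A,\phi_0))$ — which is legitimate and non-circular, since the self-duality theorem is established before this proposition — and use Lemma \ref{3} to identify the dual pair's induced bracket as $-\{f,g\}$, so that (14) for the dual reads $d_{*}^{X_0}(-\{f,g\})=[d_{*}^{X_0}f,d_{*}^{X_0}g]$, i.e.\ (15). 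Your route buys economy (one computation instead of two) and makes the sign in (15) conceptually transparent as the antisymmetry of the pairing in Lemma \ref{3}, at the price of invoking the heavier duality theorem; the paper's route is self-contained at the level of the lemmas and displays (14) and (15) as perfectly parallel statements. Since you also note the direct parallel argument as an alternative, nothing is missing from your proposal.
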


\begin{proof}
 since $ \{f,g\} = \langle d^{\phi_0} f , d^{X_0}_{*} g \rangle  = \rho^{X_0}_{*} (d^{\phi_0} f) g  = [d^{\phi_0} f , g ]_{*}^{X_0} $
$$\therefore d^{\phi_0} \{f,g\} = [d^{\phi_0} f , d^{\phi_0} g ]_{*}^{X_0} = [d^{\phi_0} f , d^{\phi_0} g ]_{*} $$
similarly, $ \{f,g\} = \rho^{\phi_0} (d_{*}^ {X_0} g) f  = [d_{*}^ {X_0} g , f]^{\phi_0} $
$$ \therefore d_{*}^ {X_0} \{f,g\} = [d_{*}^ {X_0} g , d_{*}^ {X_0} f]^{\phi_0} = - [d_{*}^ {X_0} f , d_{*}^ {X_0} g] $$

\end{proof}

\begin{thm}
  Let $((A, \phi_0), (A^*, X_0))$ be a generalized Lie bialgebroid, then the bracket defined above is a Jacobi structure on the base manifold M.
\end{thm}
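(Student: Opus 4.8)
The plan is to check, in turn, the four requirements of a Jacobi structure in Definition~\ref{jacobi manifold}: $\R$-bilinearity, skew-symmetry, the first-order differential operator property (i), and the Jacobi identity (ii). Bilinearity is automatic, since $d^{\phi_0}$ and $d_*^{X_0}$ are $\R$-linear and the pairing $\langle\cdot,\cdot\rangle$ is bilinear. Skew-symmetry is precisely equation~(7) of Lemma~\ref{3}: writing $\{g,f\} = \langle d^{\phi_0}g, d_*^{X_0}f\rangle = \langle d_*^{X_0}f, d^{\phi_0}g\rangle$ and substituting $\langle d_*^{X_0}f, d^{\phi_0}g\rangle = -\langle d^{\phi_0}f, d_*^{X_0}g\rangle$ gives $\{g,f\} = -\{f,g\}$.

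For property (i) I would invoke the twisted Leibniz rule $d^{\phi_0}(\alpha\wedge\beta) = d^{\phi_0}\alpha\wedge\beta + (-1)^{|\alpha|}\alpha\wedge d^{\phi_0}\beta - \phi_0\wedge\alpha\wedge\beta$ of Proposition~3.1(i), applied to the two $0$-forms $f$ and $g$:
$$ d^{\phi_0}(fg) = g\, d^{\phi_0}f + f\, d^{\phi_0}g - fg\,\phi_0 . $$
Since $d^{\phi_0}1 = \phi_0$, one has $\{1,h\} = \langle \phi_0, d_*^{X_0}h\rangle$, so pairing the displayed identity with $d_*^{X_0}h$ gives at once
$$ \{fg,h\} = f\{g,h\} + g\{f,h\} - fg\,\{1,h\}, $$
which is condition (i).

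The substantive step is the Jacobi identity, and here the plan is to reduce it to the representation property of $\rho^{\phi_0}$ via equation~(15). Starting from $\{f,\{g,h\}\} = \langle d^{\phi_0}f, d_*^{X_0}\{g,h\}\rangle$ and inserting $d_*^{X_0}\{g,h\} = -[d_*^{X_0}g, d_*^{X_0}h]$, I rewrite the pairing through the identity $\langle d^{\phi_0}f, X\rangle = \rho^{\phi_0}(X)f$ (valid because $d^{\phi_0}f = df + f\phi_0$) to get $\{f,\{g,h\}\} = -\rho^{\phi_0}([d_*^{X_0}g, d_*^{X_0}h])f$. Since $\rho^{\phi_0}$ is a representation of $A$, so that $\rho^{\phi_0}([X,Y]) = [\rho^{\phi_0}(X),\rho^{\phi_0}(Y)]$ as operators on $C^\infty(M)$, and since $\rho^{\phi_0}(d_*^{X_0}h)f = \langle d^{\phi_0}f, d_*^{X_0}h\rangle = \{f,h\}$, expanding the commutator yields
$$ \{f,\{g,h\}\} = \{\{f,g\},h\} - \{\{f,h\},g\}. $$
Substituting the skew-symmetry proved above (so that $\{f,\{g,h\}\} = -\{\{g,h\},f\}$ and $-\{\{f,h\},g\} = \{\{h,f\},g\}$) rearranges this into $\{\{f,g\},h\} + \{\{g,h\},f\} + \{\{h,f\},g\} = 0$, which is condition (ii).

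The hard part is in fact already behind us: it is the identity $d_*^{X_0}\{f,g\} = -[d_*^{X_0}f, d_*^{X_0}g]$ of the preceding Proposition (equation~(15)), whose proof leans on the self-duality of generalized Lie bialgebroids established above and on the full chain of Lemmas~\ref{1}--\ref{6}. Granting that, the only ingredient specific to the present argument is that $\rho^{\phi_0}$ is an honest representation, i.e. that it intertwines the bracket of $A$ with the operator commutator on $C^\infty(M)$; this is the short computation recorded in Section~3, resting only on $\rho$ being a Lie algebra homomorphism together with the cocycle condition $\phi_0([X,Y]) = \rho(X)\phi_0(Y) - \rho(Y)\phi_0(X)$. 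After these, the Jacobi identity is purely formal bookkeeping with skew-symmetry.
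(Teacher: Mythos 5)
Your proposal is correct and takes essentially the same approach as the paper: skew-symmetry comes from equation (7), and the Jacobi identity from equation (15) together with the fact that $\rho^{\phi_0}$ intertwines the bracket of $A$ with the operator commutator on $C^\infty(M)$ (the $1$-cocycle identity), which is exactly the paper's mechanism, merely organized as a direct derivation of $\{f,\{g,h\}\} = \{\{f,g\},h\} - \{\{f,h\},g\}$ rather than the paper's cyclic-sum argument with Hamiltonian sections $X_f = -d_*^{X_0}f$ and the $J=2J$ trick. A minor bonus on your side is that you verify the first-order differential operator property (i) explicitly via $d^{\phi_0}(fg) = g\,d^{\phi_0}f + f\,d^{\phi_0}g - fg\,\phi_0$ and $d^{\phi_0}1=\phi_0$, where the paper only asserts it is easy to check.
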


\begin{proof}
The bracket is skew-symmetric follows from (7). It is also easy to check that, the bracket satisfies the first order differential operator on each argument.\\
 
To prove the Jacobi identity of the bracket, for any $f \in C^\infty (M),$ define
$$ X_{f} = - d_{*}^{X_0} f \in \Gamma A $$
then $ \rho ^ {\phi_0} (X_{f})g = - \rho ^ {\phi_0} (d_{*}^{X_0} f) g = - \langle d^{\phi_0} g, d_{*}^{X_0} f \rangle = \{ f, g \} $\\
from (15), it follows that,
$$ [X_{f}, X_{g}] = X_{\{f,g\}} $$

Now consider, 
\begin{align*}
 J =&  \{\{f_1,f_2\},f_3\} + \{\{f_2,f_3\},f_1\} + \{\{f_3,f_1\},f_2\} \\
 =& \rho ^ {\phi_0} (X_{\{f_1,f_2\}}) f_3 +  c.p \\
 =& \rho ^ {\phi_0} ([X_{f_1},X_{f_2}]) f_3 +  c.p\\
 =& \rho ^ {\phi_0} (X_{f_1}) \rho ^ {\phi_0} (X_{f_2}) f_3 - \rho ^ {\phi_0} (X_{f_2}) \rho ^ {\phi_0} (X_{f_1}) f_3 + c.p \\
 =& (\{f_1, \{f_2,f_3\}\} - \{f_2, \{f_1, f_3\}) + c.p  \\
 =& 2J
\end{align*}
 (here we have used the identity $\rho ^ {\phi_0} ([X,Y])f = \rho ^ {\phi_0} (X) \rho ^ {\phi_0} (Y)f - \rho ^ {\phi_0} (Y) \rho ^ {\phi_0} (X) f$  which follows directly since $\phi_0 $  is a 1-cocycle)\\
$\therefore J = 0 $, hence the bracket satisfy the Jacobi identity.
\end{proof}

\begin{remark}
 (i) From (13), we have $\{f, g\} = \langle df, d_{*}g \rangle + f \rho_{*}(\phi_0)g + g \rho(X_0)f $\\
 therefore the induced Jacobi bivector field $\Lambda$ and the vector field $E$ is given by
 $$ \Lambda (\delta f, \delta g) = \langle df, d_{*}g \rangle = - \langle dg, d_{*}f \rangle$$ 
 $$E =  \rho_{*}(\phi_0) = - \rho (X_0) $$
 (ii) Let $(M, \Lambda, E)$ be a Jacobi manifold. Consider the generalized Lie bialgebroid $((T M \times \mathbb{R}, (0,1)), (T^*M \times \mathbb{R}, (-E,0)))$ given in Example 4.3, then the induced Jacobi structure on M coincide  with the original Jacobi structure.\\
 (iii) The dual generalized Lie bialgebroid $((A^*, X_0),(A, \phi_0))$ induces the opposite Jacobi structure of the above.
\end{remark}

Next we give the very natural definition of generalized Lie bialgebroid morphism.
\begin{defn}
 A morphism between two generalized Lie bialgebroid $((A, \phi_0), (A^*, X_0))$ and $((B, \psi_0), (B^*, Y_0))$ over $M$ is a map $\Phi : A \rightarrow B $ of Lie algebroids such that the dual map $ \Phi^* : B^* \rightarrow A^* $ is also a Lie algebroid map and they preserves the cocycles. i.e,
$$ \Phi (X_0) = Y_0 , \Phi^* (\psi_0) = \phi_0 $$
\end{defn}

\begin{thm}
 Let $((A, \phi_0), (A^*, X_0))$ be a generalized Lie bialgebroid over a manifold $M$. Then the map
$$ \Phi_A : A \rightarrow TM \times \mathbb{R} $$
defined by, $ \Phi_A (X) = (\rho (X), \phi_0 (X)) $
is a morphism between the generalized Lie bialgebroids $((A, \phi_0), (A^*, X_0))$ and  $((T M \times \mathbb{R}, (0,1)), (T^*M \times \mathbb{R}, (-E,0))),$ where $\rho$ is the anchor of the Lie algebroid $A$.\\
Moreover, if $\Psi : ((A, \phi_0), (A^*, X_0)) \rightarrow ((B, \psi_0), (B^*, Y_0)) $ is a morphism between generalized Lie bialgebroids over $M$, then the corresponding induced Jacobi structures on the base manifold $M$ are same.\\
\end{thm}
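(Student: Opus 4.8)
I would treat the two assertions separately, since the first is a verification of the three defining conditions and the second is essentially formal.

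\textbf{The map $\Phi_A$ and the cocycles.} The plan is to check that $\Phi_A$ is a Lie algebroid map, that its dual is a Lie algebroid map, and that the cocycles are preserved. First I would record the dual bundle map: from $\langle\Phi_A^\ast(\alpha,f),X\rangle=\langle(\alpha,f),(\rho(X),\phi_0(X))\rangle=\alpha(\rho X)+f\phi_0(X)$ one reads off $\Phi_A^\ast(\alpha,f)=\rho^\ast(\alpha)+f\phi_0$. That $\Phi_A\colon A\to TM\times\mathbb R$ respects anchors is immediate (the anchor of $TM\times\mathbb R$ is the first projection), and that it respects brackets, $\Phi_A[X,Y]=[\Phi_A X,\Phi_A Y]$, follows from exactly two inputs: $\rho$ is a bracket homomorphism, and $\phi_0$ is a $1$-cocycle, so that $\phi_0([X,Y])=\rho(X)\phi_0(Y)-\rho(Y)\phi_0(X)$ matches the second slot of the bracket on $TM\times\mathbb R$. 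For the cocycles I would use (3) together with $E=\rho_\ast(\phi_0)=-\rho(X_0)$ to get $\Phi_A(X_0)=(\rho(X_0),\phi_0(X_0))=(-E,0)$, and $\Phi_A^\ast(0,1)=\rho^\ast(0)+\phi_0=\phi_0$.

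\textbf{The dual $\Phi_A^\ast$ is a morphism of Lie algebroids (the crux).} For the anchors I would invoke the induced Jacobi data: $\rho_\ast\circ\rho^\ast=\Lambda^\sharp$ (both send $\delta f$ to $g\mapsto\langle df,d_\ast g\rangle=\Lambda(\delta f,\delta g)$) and $\rho_\ast(\phi_0)=E$, which give $\rho_\ast(\Phi_A^\ast(\alpha,f))=\Lambda^\sharp(\alpha)+fE=\rho_{(\Lambda,E)}(\alpha,f)$. For the brackets the key observation is that $\Phi_A^\ast$ carries the deformed differential of the target to that of the source on functions: the $(0,1)$-deformed differential of $TM\times\mathbb R$ sends $f$ to the $1$-jet $(\delta f,f)$, whence $\Phi_A^\ast(d^{(0,1)}f)=\rho^\ast(\delta f)+f\phi_0=d^{\phi_0}f$. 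Since the sections $d^{(0,1)}f$ generate $\Gamma(T^\ast M\times\mathbb R)$ as a $C^\infty(M)$-module (note $(0,1)=d^{(0,1)}1$ and $(\delta f,0)=d^{(0,1)}f-f(0,1)$), and $\Phi_A^\ast$ is $C^\infty(M)$-linear and already anchor-compatible, the Leibniz rule reduces bracket-compatibility to these exact sections. There I would apply (14) to \emph{both} bialgebroids: for the target it gives $\textlbrackdbl d^{(0,1)}f,d^{(0,1)}g\textrbrackdbl_{(\Lambda,E)}=d^{(0,1)}\{f,g\}$ and for the source $[d^{\phi_0}f,d^{\phi_0}g]_\ast=d^{\phi_0}\{f,g\}$. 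Both induced brackets equal the Jacobi bracket of $(\Lambda,E)$ — the source one by rewriting (13) as in the induced-structure Remark, the target one by that same Remark applied to the Jacobi manifold — so applying $\Phi_A^\ast$ and using $\Phi_A^\ast\circ d^{(0,1)}=d^{\phi_0}$ identifies the two sides. I expect this reduction — justifying the $C^\infty(M)$-generation plus Leibniz step and matching the two induced brackets — to be the main obstacle; the remainder is bookkeeping.

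\textbf{Invariance under a morphism $\Psi$.} Here I would argue purely formally. A Lie algebroid map over $\mathrm{id}_M$ induces a cochain map, so on functions $\Psi^\ast\circ d_B=d_A$; dualizing the Lie algebroid map $\Psi^\ast\colon B^\ast\to A^\ast$ gives likewise $\Psi\circ d_\ast^{A}=d_\ast^{B}$ on functions. Folding in the preserved cocycles $\Psi^\ast(\psi_0)=\phi_0$ and $\Psi(X_0)=Y_0$ upgrades these to the deformed operators, $\Psi^\ast\circ d_B^{\psi_0}=d_A^{\phi_0}$ and $\Psi\circ d_\ast^{A,X_0}=d_\ast^{B,Y_0}$, on functions. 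Then the adjunction $\langle\Psi^\ast\beta,X\rangle=\langle\beta,\Psi X\rangle$ yields
$$\{f,g\}_A=\langle d_A^{\phi_0}f,d_\ast^{A,X_0}g\rangle=\langle\Psi^\ast(d_B^{\psi_0}f),d_\ast^{A,X_0}g\rangle=\langle d_B^{\psi_0}f,\Psi(d_\ast^{A,X_0}g)\rangle=\langle d_B^{\psi_0}f,d_\ast^{B,Y_0}g\rangle=\{f,g\}_B,$$
so the two induced Jacobi brackets coincide. This part needs no new geometry once the chain-map and cocycle-preservation identities are in place.
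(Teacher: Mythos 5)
Your proposal is correct, and while its overall skeleton (verify the defining conditions of a morphism, then a formal computation for invariance) matches the paper's, it differs in two substantive ways. First, the paper's own proof simply asserts that $\Phi_A$ is ``clearly'' a Lie algebroid map and that bracket-preservation by $\Phi_A^*$ is ``easy to verify''; you supply an actual argument — reduce to the exact sections $\tilde\delta^{(0,1)}f=(\delta f,f)$ via $C^\infty(M)$-linearity, anchor compatibility and the Leibniz rule, then apply Proposition 4.10 (equation (14)) to both bialgebroids, together with Remark 4.12 to identify both induced brackets with the Jacobi bracket of $(\Lambda,E)$, all hinging on the identity $\Phi_A^*\,\tilde\delta^{(0,1)}f=d^{\phi_0}f$. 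That identity is precisely the one the paper uses, but only later and only for the second part. Second, for invariance under $\Psi$, the paper factors $\Phi_A=\Phi_B\circ\Psi$, hence $\Phi_A^*=\Psi^*\circ\Phi_B^*$, and computes with the deformed anchors via $\rho_{A^*}^{X_0}\circ\Psi^*=\rho_{B^*}^{Y_0}$; you instead prove the chain-map identities $\Psi^*\circ d_B^{\psi_0}=d_A^{\phi_0}$ and $\Psi\circ d_{*}^{A,X_0}=d_{*}^{B,Y_0}$ on functions and conclude by adjunction — an equivalent but more self-contained route that never passes through $TM\times\mathbb{R}$. What your route buys is completeness: the reduction-to-jets argument is the genuine mathematical content of the first assertion, which the paper leaves unproved. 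What the paper's route buys is conceptual packaging: writing $\{f,g\}_A=\rho_{A^*}^{X_0}(\Psi^*\Phi_B^*\tilde\delta^{(0,1)}f)g$ exhibits $\Phi_A$ and $\Phi_B$ as canonical comparison maps, so ``uniqueness up to morphism'' is visibly a factorization statement through the bialgebroid of the induced Jacobi structure.
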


\begin{proof}
 The map $\Phi_A$ is clearly a Lie algebroid map and $ \Phi_A (X_0) = (\rho(X_0),0) = (-E,0).$\\
Now the dual map $\Phi_A^{*} : T^{*}M \times \mathbb{R} \rightarrow A^{*}$ is such that
\begin{align*}
  \Phi_A^{*} (\alpha, f) (X) =& \langle (\alpha, f), \Phi_A (X) \rangle \\
 =& \langle (\alpha, f), (\rho(X),\phi_0 (X)\rangle \\
 =& \alpha(\rho(X)) + f \phi_0 (X) 
\end{align*}
therefore, $ \Phi_A ^{*} (\alpha, f) = \rho ^{*}(\alpha) + f \phi_0 $\\
hence $\Phi_A ^{*} (0,1) = \phi_0 $.\\
It is easy to verify that,
 $$ \Phi_A ^{*} : (T^{*}M \times \mathbb{R}, [.,.]_{(\Lambda,E)}, \rho_{(\Lambda, E)}) \rightarrow (A^{*}, [.,.]_{*}, \rho_{*}) $$
preserves the Lie bracket. It also commutes with the anchors, as
\begin{align*}
 \rho_{*} \circ \Phi_A ^{*} (\alpha, f) =& \rho_{*} (\rho ^{*}(\alpha) + f \phi_0) \\
 =& \rho_{*} (\rho ^{*}(\alpha)) + f \rho_{*} (\phi_0) \\
 =& \Lambda^{\sharp} (\alpha) + f E  = \rho_{(\Lambda, E)} (\alpha, f) \\
\end{align*}
To prove the last part of the theorem, 
let the Lie algebroid differential of $A$ and $A^{*}$ (resp. $B$ and $B^{*}$) are denoted by $d_A$ and $d_{A^*}$ (resp. $d_B$ and $d_{B^*}$). Similarly the anchors are denoted by $\rho_A$ and $\rho_{A^*}$ (resp. $\rho_B$ and $\rho_{B^*}$).\\
\begin{align*}
  \{f, g\}_A = \langle d_{A}^{\phi_0} f, d_{A^{*}}^{X_0} g \rangle =& \rho_{A^{*}}^{X_0} (d_{A}^{\phi_0} f) g \\
 =& \rho_{A^{*}}^{X_0} (\Phi_{A}^{*} \tilde \delta ^{(0,1)} f) g \\
 =& \rho_{A^{*}}^{X_0} (\Psi^{*} \Phi_{B}^{*} \tilde \delta ^{(0,1)} f) g \\
 =& \rho_{A^{*}}^{X_0} \Psi^{*} (\Phi_{B}^{*} \tilde \delta ^{(0,1)} f) g \\
 =& \rho_{B^{*}}^{Y_0} (d_{B}^{\psi_0} f) g = \{f,g\}_B \\
\end{align*}
(where  $\tilde \delta ^{(0,1)}$ is $(0,1)$-deformed differential of the Lie algebroid $T M \times \mathbb{R}$ )
\end{proof}
Therefore the induced Jacobi structure on the base of a generalized Lie bialgebroid is unique upto a morphism.

\section{Triangular Generalized lie Bialgebroids}

In this section, we consider a special type of generalized Lie bialgebroids defined by a Lie algebroid $A$ with a 1-cocycle $\phi_0 \in \Gamma A^*$ and a suitable 2- multisection $P \in \Gamma(\wedge^{2}A)$ of it.
This includes triangular generalized Lie bialgebra and the generalized Lie bialgebroid associated to the Jacobi manifolds.\\
  Let $(A, [~,~], \rho)$ be a lie algebroid over $M$ with a 1-cocyle $\phi_0 \in \Gamma A^*$. Let $P \in \Gamma(\wedge^{2}A)$ be a 2-multisection of $A$ such that
$$ [P,P]^{\phi_0} = 0 $$
Define a bracket $[~,~]_{*}$ on $\Gamma A^*$ by
$$ [\alpha, \beta]_{*} := \mathcal{L}^{\phi_0}_{P ^{\sharp} (\alpha)} \beta - \mathcal{L}^{\phi_0}_{P ^{\sharp} (\beta)} \alpha - d^{\phi_0} (P(\alpha, \beta)) $$
for $\alpha, \beta \in \Gamma A^*$.

\begin{thm}
 Let $(A, [~,~], \rho)$ be a Lie algebroid over $M$ with a 1-cocyle $\phi_0 \in \Gamma A^*$. Let $P \in \Gamma(\wedge^{2}A)$ be such that
$ [P,P]^{\phi_0} = 0 $, then the dual bundle $A^{*}$ together with the bracket defined above and the bundle map $\rho_{*} = \rho \circ P^{\sharp} : A^{*} \rightarrow TM $ as anchor is a Lie algebroid. Moreover, $X_0 = - P^{\sharp}(\phi_0) \in \Gamma A$ is a 1-cocycle of it.\\ 
\end{thm}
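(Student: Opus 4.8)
The plan is to verify the Lie algebroid axioms for $(A^*,[~,~]_*,\rho_*)$ directly, working inside the $\phi_0$-deformed calculus of Section~3, and then to check the cocycle condition. Skew-symmetry of $[~,~]_*$ is immediate, since $P(\alpha,\beta)=-P(\beta,\alpha)$ and the first two terms interchange. For the Leibniz rule I would expand $[\alpha,f\beta]_* = \mathcal{L}^{\phi_0}_{P^\sharp\alpha}(f\beta) - \mathcal{L}^{\phi_0}_{fP^\sharp\beta}\alpha - d^{\phi_0}(fP(\alpha,\beta))$ using Proposition~3.1(ii),(iii) together with the elementary identity $d^{\phi_0}(fh)=h\,df+f\,d^{\phi_0}h$ for functions. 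The two first-order contributions, namely $df\wedge\iota_{P^\sharp\beta}\alpha$ coming from $\mathcal{L}^{\phi_0}_{fP^\sharp\beta}\alpha$ and the term $P(\alpha,\beta)\,df$ coming from $d^{\phi_0}(fP(\alpha,\beta))$, cancel because $\iota_{P^\sharp\beta}\alpha=\langle\alpha,P^\sharp\beta\rangle=-P(\alpha,\beta)$; what survives is $f[\alpha,\beta]_*+(\rho(P^\sharp\alpha)f)\beta = f[\alpha,\beta]_*+(\rho_*(\alpha)f)\beta$.

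The substantive part is the Jacobi identity for $[~,~]_*$ and the anchor--homomorphism property, both of which are governed by the hypothesis $[P,P]^{\phi_0}=0$. I would organise this through the derived bracket: set $d_*:=[P,~\cdot~]^{\phi_0}\colon\Gamma(\wedge^\bullet A)\to\Gamma(\wedge^{\bullet+1}A)$. Since the deformed Schouten bracket satisfies the graded Jacobi identity, one has $d_*^2 Q = [P,[P,Q]^{\phi_0}]^{\phi_0} = \frac12[\,[P,P]^{\phi_0},Q\,]^{\phi_0} = 0$, so $d_*$ is a differential; the Koszul-type bracket $[~,~]_*$ is exactly the one induced by $d_*$, and the vanishing $d_*^2=0$ is precisely the Jacobi identity. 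The anchor then behaves well because $P^\sharp$ is a morphism of Lie algebroids, i.e.\ $P^\sharp[\alpha,\beta]_*=[P^\sharp\alpha,P^\sharp\beta]$ (again a consequence of $[P,P]^{\phi_0}=0$ and Theorem~3.2(ii)); combining this with the fact that $\rho$ is a bracket homomorphism gives $\rho_*([\alpha,\beta]_*)=\rho([P^\sharp\alpha,P^\sharp\beta])=[\rho_*\alpha,\rho_*\beta]$.

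\emph{The main obstacle} is that the deformed Schouten bracket is not a genuine Gerstenhaber bracket: property Theorem~3.2(iv) carries the correction term $-(\iota_{\phi_0}P)\wedge P'\wedge P''$, so the standard dictionary ``square-zero Gerstenhaber differential $\Leftrightarrow$ Lie algebroid on the dual'' does not apply verbatim, and these correction terms must be tracked through the derived-bracket computation. The cleanest way to dispose of this is to pass to the genuine Lie algebroid $A\times\R$ of Section~3: under the identification $\Gamma(\wedge^\bullet(A\times\R))\cong\Gamma(\wedge^\bullet A)\oplus\Gamma(\wedge^{\bullet-1}A)$ the deformed Schouten bracket on $A$ is the honest Schouten bracket of $A\times\R$, so $P$ corresponds to a bivector $\bar P$ with $[\bar P,\bar P]_{A\times\R}=0$, i.e.\ an ordinary triangular structure; the classical triangular theorem then equips $(A\times\R)^*$ with a Lie algebroid structure, and projecting back through the identification returns exactly $[~,~]_*$ and $\rho_*=\rho\circ P^\sharp$.

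Finally, for the cocycle claim, note $\iota_{\phi_0}P=P^\sharp\phi_0=-X_0$ and that $\phi_0$ is $d^{\phi_0}$-closed, since $d^{\phi_0}\phi_0=d\phi_0+\phi_0\wedge\phi_0=0$ (the first term vanishes because $\phi_0$ is a $1$-cocycle, the second because it is the square of a $1$-form). The condition that $X_0$ be a $1$-cocycle of the Lie algebroid $A^*$ reads $\langle X_0,[\alpha,\beta]_*\rangle = \rho_*(\alpha)\langle X_0,\beta\rangle - \rho_*(\beta)\langle X_0,\alpha\rangle$, which I would verify directly by substituting the definition of $[~,~]_*$ and $X_0=-P^\sharp\phi_0$ and invoking $d^{\phi_0}\phi_0=0$; equivalently, in the derived-bracket language this is the identity $[P,\iota_{\phi_0}P]^{\phi_0}=0$, obtained by contracting $[P,P]^{\phi_0}=0$ with the cocycle $\phi_0$.
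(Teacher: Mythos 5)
The parts of your proposal that proceed by direct computation --- skew-symmetry, the Leibniz rule (with the cancellation coming from $\iota_{P^\sharp\beta}\alpha=-P(\alpha,\beta)$), the reduction of the anchor condition to the morphism property $P^\sharp[\alpha,\beta]_*=[P^\sharp\alpha,P^\sharp\beta]$, and the direct check that $X_0=-P^\sharp\phi_0$ is a $1$-cocycle --- are sound and coincide with what the paper does. The genuine gap is in the step that carries all the weight, the Jacobi identity of $[~,~]_*$. You correctly flag the obstacle (since $[~,~]^{\phi_0}$ fails the Leibniz rule by Theorem 3.2(iv), the square-zero property of $d_*=[P,\cdot]^{\phi_0}$ does not by itself produce a Lie algebroid on $A^*$), but the patch you propose is false: the Lie algebroid structure on $A\times\R\to M$ recalled in Section 3 makes no reference to $\phi_0$ at all, so its honest Schouten bracket cannot reproduce the $\phi_0$-deformed bracket under the identification $\Gamma(\wedge^\bullet(A\times\R))\cong\Gamma(\wedge^\bullet A)\oplus\Gamma(\wedge^{\bullet-1}A)$. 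Already in lowest degree, $[X,f]^{\phi_0}=\rho(X)f+\phi_0(X)f$ whereas $[(X,0),f]_{A\times\R}=\rho(X)f$. More structurally, the section $e=(0,1)$ of $A\times\R$ is central and killed by the anchor, so for any lift $\bar P=P+e\wedge Q$ one finds $[\bar P,\bar P]_{A\times\R}=[P,P]\pm 2\,e\wedge[P,Q]$; hence $[\bar P,\bar P]_{A\times\R}=0$ forces the $\phi_0$-independent condition $[P,P]=0$, which is strictly stronger than the hypothesis $[P,P]^{\phi_0}=0$. In the motivating example $A=TM\times\R$, $\phi_0=(0,1)$, $P=(\Lambda,E)$ coming from a Jacobi manifold, your reduction would therefore demand that $\Lambda$ be Poisson --- exactly what $[P,P]^{\phi_0}=0$ does not say.

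The honest-bracket realization of $[~,~]^{\phi_0}$ that does exist (and is how such questions are handled in \cite{1}) lives over the base $M\times\R$, not over $M$: one passes to the Lie algebroid $A\times\R\to M\times\R$ with anchor $\tilde X\mapsto\rho(\tilde X)+\phi_0(\tilde X)\,\partial/\partial t$ and identifies $P\in\Gamma(\wedge^pA)$ with the time-dependent section $e^{(p-1)t}P$, under which the deformed Schouten bracket becomes the honest one. Your triangular argument can indeed be run there, but one must then track the exponential weights and the $\partial/\partial t$-components to see that the induced Lie algebroid on the dual restricts to $[~,~]_*$ and $\rho_*=\rho\circ P^\sharp$ on $A^*$ over $M$ --- precisely the correction-term bookkeeping your proposal set out to avoid, and which it does not contain. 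As written, then, the Jacobi identity is unproven. For comparison, the paper takes no such detour: it verifies the Jacobi identity by direct computation from $[P,P]^{\phi_0}=0$, using only the fact that $[~,~]^{\phi_0}$ is a graded Lie (though not Gerstenhaber) bracket; your remaining steps agree with its proof.
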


\begin{proof}
 The bracket is skew-symmetric follows from the expression.\\
Note that,
\begin{align*}
 [\alpha, f \beta]_{*} =& \mathcal{L}^{\phi_0}_{P ^{\sharp} (\alpha)} f \beta - \mathcal{L}^{\phi_0}_{P ^{\sharp} (f \beta)} \alpha - d^{\phi_0} (P(\alpha, f \beta)) \\
 =& f \mathcal{L}^{\phi_0}_{P ^{\sharp} (\alpha)} \beta + (\rho ({P ^{\sharp} \alpha})f) \beta - f \mathcal{L}^{\phi_0}_{P ^{\sharp} (\beta)} \alpha - df \wedge \iota_{P ^{\sharp} (\beta)} \alpha 
 - f d^{\phi_0} (P(\alpha,  \beta)) - df \wedge P(\alpha, \beta) \\
 =& f [\alpha, \beta]_{*} + (\rho ({P ^{\sharp} \alpha})f) \beta \\
\end{align*}
The Jacobi identity of the bracket follows from a direct computation and the condition $ [P,P]^{\phi_0} = 0 $. Therefore $(A^*, [~,~]_{*}, \rho_{*})$ is a Lie algebroid.\\

Note that, here $ P^{\sharp} : A^{*} \rightarrow A $ is a Lie algebroid morphism. Since it commutes with the anchors and for any $\alpha, \beta, \gamma \in \Gamma{A^*}$, we have\\
$ \langle P^{\sharp} ([\alpha,\beta]_{*}), \gamma \rangle $\\
$ = - \langle [\alpha,\beta]_{*} , P^{\sharp} \gamma \rangle $\\
$ = - \langle \mathcal{L}^{\phi_0}_{P ^{\sharp} (\alpha)} \beta , P^{\sharp} \gamma \rangle + \langle \mathcal{L}^{\phi_0}_{P ^{\sharp} (\beta)} \alpha , P^{\sharp} \gamma \rangle + \langle d^{\phi_0} (P(\alpha, \beta)), P^{\sharp} \gamma \rangle $\\
$ = - \mathcal{L}^{\phi_0}_{P ^{\sharp} (\alpha)} \langle \beta, P^{\sharp} \gamma \rangle + \langle \beta, \mathcal{L}^{\phi_0}_{P ^{\sharp} (\alpha)} P^{\sharp} \gamma \rangle +
\mathcal{L}^{\phi_0}_{P ^{\sharp} (\beta)} \langle \alpha, P^{\sharp} \gamma \rangle -
\langle \alpha, \mathcal{L}^{\phi_0}_{P ^{\sharp} (\beta)} P^{\sharp} \gamma \rangle +
\rho ^{\phi_0}(P^{\sharp}\gamma)(P(\alpha, \beta))$\\
$ = - \mathcal{L}^{\phi_0}_{P ^{\sharp} (\alpha)} \langle \beta, P^{\sharp} \gamma \rangle + \langle \beta, [P^{\sharp}\alpha, P^{\sharp}\gamma] \rangle +
\mathcal{L}^{\phi_0}_{P ^{\sharp} (\beta)} \langle \alpha, P^{\sharp} \gamma \rangle - \langle \alpha, [P^{\sharp}\beta, P^{\sharp}\gamma] \rangle + \rho ^{\phi_0}(P^{\sharp}\gamma)(P(\alpha, \beta)) $\\
$ = [P,P]^{\phi_0} (\alpha, \beta, \gamma) + \langle [P^{\sharp} \alpha, P^{\sharp} \beta], \gamma \rangle $\\
$ = \langle [P^{\sharp} \alpha, P^{\sharp} \beta], \gamma \rangle $ \\
therefore, $ P^{\sharp} ([\alpha,\beta]_{*}) = [P^{\sharp} \alpha, P^{\sharp} \beta] $. \\

Now to prove that $X_0 = - P^{\sharp}(\phi_0) $ is a 1-cocycle of the Lie algebroid $A^{*}$, we have to show
$$ X_0 ([\alpha, \beta]_{*}) = \rho_{*}(\alpha) X_0 (\beta) - \rho_{*}(\beta) X_0 (\alpha), \forall \alpha, \beta \in \Gamma A^* $$
Now,
$ \rho_{*}(\alpha) X_0 (\beta) - \rho_{*}(\beta) X_0 (\alpha) - X_0 ([\alpha, \beta]_{*})$\\
$ = - \rho (P^{\sharp} \alpha) \langle P^{\sharp} \phi_0, \beta \rangle + \rho (P^{\sharp} \beta) \langle P^{\sharp} \phi_0, \alpha \rangle  + \langle P^{\sharp} \phi_0 , [\alpha, \beta]_{*} \rangle$\\
$ = - \rho (P^{\sharp} \alpha) (P(\phi_0, \beta)) + \rho (P^{\sharp} \beta) (P(\phi_0, \alpha)) + P(\phi_0, [\alpha, \beta]_{*})$\\
$ = \rho (P^{\sharp} \alpha) \langle \phi_0, P^{\sharp} \beta \rangle - \rho (P^{\sharp} \beta) \langle \phi_0, P^{\sharp} \alpha \rangle - \langle \phi_0 , P^{\sharp} [\alpha, \beta]_{*} \rangle $\\
$ = \rho (P^{\sharp} \alpha) \langle \phi_0, P^{\sharp} \beta \rangle - \rho (P^{\sharp} \beta) \langle \phi_0, P^{\sharp} \alpha \rangle - \langle \phi_0 , [ P^{\sharp} \alpha, P^{\sharp} \beta] \rangle $\\
$ =0$ (since $\phi_0$ is a 1-cocyle of $A$)

\end{proof}

\begin{remark}
 In fact $ P^{\sharp} : (A^{*}, X_0) \rightarrow (A, \phi_0) $ is a morphism between Jacobi algebroids.
$$ (P^{\sharp})^{*}(\phi_0) = X_0 $$
Since $(P^{\sharp})^{*}(\phi_0)(\alpha) = \phi_0 (P^{\sharp} \alpha) = P(\alpha, \phi_0)$\\
on the other hand,\\
 $X_0 (\alpha) = - P^{\sharp} (\phi_0) (\alpha) = - P(\phi_0, \alpha) = P(\alpha, \phi_0) $\\
therefore $ (P^{\sharp})^{*}(\phi_0) = X_0 $. Hence $P^{\sharp}$ is a map between Jacobi algebroids.

\end{remark}

\begin{lemma}
The $X_0$-deformed cohomology of the Lie algebroid $A^*$ is given by
 $$d^{X_0}_{*} X = [P, X]^{\phi_0}, \forall X \in \Gamma{A} $$
\end{lemma}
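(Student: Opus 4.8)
The plan is to verify the identity by pairing both sides against arbitrary $\alpha,\beta\in\Gamma A^*$, since $d^{X_0}_*X$ and $[P,X]^{\phi_0}$ both lie in $\Gamma(\wedge^2A)$. First I would unwind the two sides into a common form. On the left, the $X_0$-deformed differential of the degree-one element $X\in\Gamma A=\Gamma((A^*)^*)$ is $d^{X_0}_*X=d_*X+X_0\wedge X$, where $d_*$ is computed from the Cartan formula for the Lie algebroid $A^*$,
$$(d_*X)(\alpha,\beta)=\rho_*(\alpha)\langle X,\beta\rangle-\rho_*(\beta)\langle X,\alpha\rangle-\langle X,[\alpha,\beta]_*\rangle.$$
On the right, the definition of the $\phi_0$-deformed Schouten bracket with $r=2,\ r'=1$ collapses to $[P,X]^{\phi_0}=[P,X]+\phi_0(X)\,P$, because the term $(\iota_{\phi_0}P)\wedge X$ carries the factor $(r'-1)=0$.

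For the left-hand side I would substitute $\rho_*=\rho\circ P^{\sharp}$ and the explicit formula for $[\alpha,\beta]_*$, and then expand $\langle X,[\alpha,\beta]_*\rangle$ using the derivation property of the $\phi_0$-deformed Lie derivative with respect to the pairing $\langle\cdot,\cdot\rangle$ (the same compatibility already exploited in the preceding lemmas), together with $\mathcal{L}^{\phi_0}_{P^{\sharp}\alpha}X=[P^{\sharp}\alpha,X]^{\phi_0}=[P^{\sharp}\alpha,X]$ (part (ii) of the cited theorem on the deformed Schouten bracket) and $\langle X,d^{\phi_0}h\rangle=\rho^{\phi_0}(X)h$ for $h=P(\alpha,\beta)$. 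The crucial bookkeeping point is that the anchor deformation produces terms $\phi_0(P^{\sharp}\alpha)\langle X,\beta\rangle$, while the deformation term $X_0\wedge X$ contributes exactly the opposite, since $X_0=-P^{\sharp}\phi_0$ yields $\langle\alpha,X_0\rangle=P(\alpha,\phi_0)=\phi_0(P^{\sharp}\alpha)$. After these cancellations I expect the left-hand side to reduce to
$$(d^{X_0}_*X)(\alpha,\beta)=\beta([P^{\sharp}\alpha,X])-\alpha([P^{\sharp}\beta,X])+\rho(X)\,P(\alpha,\beta)+\phi_0(X)\,P(\alpha,\beta).$$

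It then remains to match the non-deformed part, i.e. to show $[P,X](\alpha,\beta)=\beta([P^{\sharp}\alpha,X])-\alpha([P^{\sharp}\beta,X])+\rho(X)\,P(\alpha,\beta)$. Here I would use $[P,X]=-\mathcal{L}_XP$ together with the fact that the sharp map intertwines Lie derivatives, $(\mathcal{L}_XP)^{\sharp}\alpha=[X,P^{\sharp}\alpha]-P^{\sharp}(\mathcal{L}_X\alpha)$; pairing this with $\beta$, antisymmetrizing in $\alpha,\beta$, and comparing with the tensorial identity $(\mathcal{L}_XP)(\alpha,\beta)=\rho(X)P(\alpha,\beta)-P(\mathcal{L}_X\alpha,\beta)-P(\alpha,\mathcal{L}_X\beta)$ produces the claimed expression, whence the two sides agree. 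The main obstacle is purely one of sign and convention management: one must stay consistent about the convention $\langle\beta,P^{\sharp}\alpha\rangle=P(\alpha,\beta)$, the graded antisymmetry $[P,X]=-[X,P]$, and the interlocking of the two independent deformations (the cocycle $X_0$ on the $A^*$-side and the $1$-cocycle $\phi_0$ inside the Schouten bracket)—the whole identity hinges on the precise cancellation of the $\phi_0(P^{\sharp}\cdot)$ terms isolated above.
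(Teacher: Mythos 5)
Your proposal is correct and follows essentially the same route as the paper: both evaluate $d^{X_0}_*X$ on an arbitrary pair $(\alpha,\beta)$ via the Cartan formula for the dual algebroid, expand $[\alpha,\beta]_*$ using the Leibniz rules for the ($\phi_0$-deformed) Lie derivatives, the identity $\langle X,d^{\phi_0}h\rangle=\rho^{\phi_0}(X)h$, and the intertwining $\mathcal{L}^{\phi_0}_{P^{\sharp}\alpha}X=[P^{\sharp}\alpha,X]$, and then reassemble the result as $-(\mathcal{L}^{\phi_0}_XP)(\alpha,\beta)=[P,X]^{\phi_0}(\alpha,\beta)$. The only difference is bookkeeping: the paper keeps the deformations packaged inside $\rho^{X_0}_*$ and $\mathcal{L}^{\phi_0}$ throughout, whereas you split both sides into undeformed pieces plus cocycle corrections ($d_*X+X_0\wedge X$ and $[P,X]+\phi_0(X)P$) and cancel the $\phi_0(P^{\sharp}\cdot)$ terms explicitly --- the two computations agree term by term.
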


\begin{proof}
For any $\alpha, \beta \in \Gamma{A^*} $, we have
\begin{align*}
 (d^{X_0}_{*} X)(\alpha, \beta) =& \rho ^{X_0}_{*}(\alpha) X(\beta) - \rho ^{X_0}_{*}(\beta) X(\alpha) - X([\alpha, \beta]_{*})\\
 =& \rho ^{\phi_0} (P ^{\sharp} \alpha) X(\beta) - \rho ^{\phi_0} (P ^{\sharp} \beta) X(\alpha) - \langle X , \mathcal{L}^{\phi_0}_{P ^{\sharp} (\alpha)} \beta - \mathcal{L}^{\phi_0}_{P ^{\sharp} (\beta)} \alpha - d^{\phi_0} (P(\alpha, \beta)) \rangle  \\
 =& \mathcal{L}^{\phi_0}_{P ^{\sharp} (\alpha)} \langle X, \beta \rangle - \mathcal{L}^{\phi_0}_{P ^{\sharp} (\beta)} \langle X, \alpha \rangle - \langle X , \mathcal{L}^{\phi_0}_{P ^{\sharp} (\alpha)} \beta - \mathcal{L}^{\phi_0}_{P ^{\sharp} (\beta)} \alpha - d^{\phi_0} (P(\alpha, \beta)) \rangle \\
 =& \langle \mathcal{L}^{\phi_0}_{P ^{\sharp} (\alpha)} X, \beta \rangle - \langle \mathcal{L}^{\phi_0}_{P ^{\sharp} (\beta)} X, \alpha \rangle + \langle X, d^{\phi_0} (P(\alpha, \beta)) \rangle \\
 =& \langle \mathcal{L}^{\phi_0}_{P ^{\sharp} (\alpha)} X, \beta \rangle - \langle \mathcal{L}^{\phi_0}_{P ^{\sharp} (\beta)} X, \alpha \rangle + \rho ^{\phi_0}(X) (P(\alpha, \beta))  \\
 =& - \langle \mathcal{L}^{\phi_0}_X {P ^{\sharp} (\alpha)} , \beta \rangle + \langle \mathcal{L}^{\phi_0}_X {P ^{\sharp} (\beta)} , \alpha \rangle  + \mathcal{L}^{\phi_0}_X (P(\alpha, \beta)) \\
 =& - \mathcal{L}^{\phi_0}_X \langle {P ^{\sharp} (\alpha)} , \beta \rangle +
\langle P ^{\sharp} (\alpha), \mathcal{L}^{\phi_0}_X \beta \rangle + 
\mathcal{L}^{\phi_0}_X \langle {P ^{\sharp} (\beta)} , \alpha \rangle -
\langle P ^{\sharp} (\beta), \mathcal{L}^{\phi_0}_X \alpha \rangle + \mathcal{L}^{\phi_0}_X (P(\alpha, \beta)) \\
 =& P(\alpha, \mathcal{L}^{\phi_0}_X \beta ) - P (\beta, \mathcal{L}^{\phi_0}_X \alpha ) - \mathcal{L}^{\phi_0}_X (P(\alpha, \beta)) \\
 =& - (\mathcal{L}^{\phi_0}_X P)(\alpha, \beta) = [P, X]^{\phi_0} (\alpha, \beta)
\end{align*}

\end{proof}

\begin{thm}
 The pair $((A, \phi_0), (A^*, X_0))$ is a generalized Lie bialgebroid.
\end{thm}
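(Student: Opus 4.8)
The plan is to verify the two defining conditions (1) and (2) of a generalized Lie bialgebroid for the pair $((A,\phi_0),(A^*,X_0))$, exploiting the Lie algebroid structure on $A^*$ and the identity $d^{X_0}_*X=[P,X]^{\phi_0}$ furnished by the preceding theorem and lemma. Note that the triangular hypothesis $[P,P]^{\phi_0}=0$ has already been used there (it is exactly what makes $[~,~]_*$ satisfy the Jacobi identity and $d^{X_0}_*$ a genuine differential), so at this stage the statement should reduce to formal properties of the $\phi_0$-deformed Schouten bracket together with the cocycle property of $\phi_0$.

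For condition (1) I would simply substitute the lemma into each term. Since $[X,Y]\in\Gamma A$, the lemma gives $d^{X_0}_*[X,Y]=[P,[X,Y]]^{\phi_0}$, while $[d^{X_0}_*X,Y]^{\phi_0}=[[P,X]^{\phi_0},Y]^{\phi_0}$ and $[X,d^{X_0}_*Y]^{\phi_0}=[X,[P,Y]^{\phi_0}]^{\phi_0}$; here I also use $[X,Y]^{\phi_0}=[X,Y]$. The required equality is then precisely the graded Jacobi identity of the Gerstenhaber bracket $[~,~]^{\phi_0}$ applied to the triple $(P,X,Y)$ of degrees $(2,1,1)$, for which the Koszul sign $(-1)^{(2-1)(1-1)}$ equals $+1$. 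Thus (1) is a one-line consequence of the graded Jacobi identity carried by $\Gamma(\wedge^{\bullet}A)$.

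For condition (2) I would first extend the lemma to all degrees: both $d^{X_0}_*$ and $[P,~\cdot~]^{\phi_0}$ are degree $+1$ operators on $\Gamma(\wedge^{\bullet}A)$ whose Leibniz corrections agree (the term $-(\iota_{\phi_0}P)\wedge(\cdots)=X_0\wedge(\cdots)$ for $[P,~\cdot~]^{\phi_0}$ matches the $X_0\wedge(\cdots)$ twist of $d^{X_0}_*$), and they coincide on functions and on sections, hence everywhere. Writing $\mathcal{L}^{X_0}_{*\phi_0}Q=d^{X_0}_*\iota_{\phi_0}Q+\iota_{\phi_0}d^{X_0}_*Q$ by Cartan's formula and $\mathcal{L}^{\phi_0}_{X_0}Q=[X_0,Q]^{\phi_0}$, and substituting $X_0=-\iota_{\phi_0}P$, condition (2) becomes
$$\iota_{\phi_0}[P,Q]^{\phi_0}=[\iota_{\phi_0}P,Q]^{\phi_0}-[P,\iota_{\phi_0}Q]^{\phi_0},$$
that is, the assertion that $\iota_{\phi_0}$ is a graded derivation of degree $-1$ of $[~,~]^{\phi_0}$. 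This identity holds precisely because $\phi_0$ is a $1$-cocycle: both sides are derivations in $Q$, so it suffices to test $Q$ a function and $Q$ a section, and the section case collapses to the cocycle relation $\phi_0([X,Y])=\rho(X)\phi_0(Y)-\rho(Y)\phi_0(X)$. Alternatively one may invoke the remark after the definition and check only the two equivalent conditions (3) and (4): (3) is immediate, since $\phi_0(X_0)=-P(\phi_0,\phi_0)=0$ by skew-symmetry and $\rho_*(\phi_0)=\rho(P^{\sharp}\phi_0)=-\rho(X_0)$, while (4) is exactly the degree-one instance of the derivation identity above.

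The main obstacle is the sign-and-twist bookkeeping in condition (2): correctly translating the deformed Lie derivative $\mathcal{L}^{X_0}_{*\phi_0}$ on $A^*$ into contraction-and-bracket data on $A$, matching the two Leibniz twists, and establishing the graded derivation property of $\iota_{\phi_0}$ over the deformed Schouten bracket with all Koszul signs in place. Condition (1), by comparison, is immediate once graded Jacobi for $[~,~]^{\phi_0}$ is granted.
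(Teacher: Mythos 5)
Your handling of condition (1) is exactly the paper's proof: substitute the lemma $d_*^{X_0}X=[P,X]^{\phi_0}$ into each term and invoke the graded Jacobi identity of the $\phi_0$-deformed Schouten bracket, with Koszul sign $+1$ for the triple $(P,X,Y)$. For condition (2) you take a genuinely more substantive route than the paper, whose entire argument is that ``conditions (3) and (4) are obvious to check.'' You promote the lemma to the operator identity $d_*^{X_0}=[P,\,\cdot\,]^{\phi_0}$ on all of $\Gamma(\wedge^{\bullet}A)$ and reduce (2) to the single conceptual statement that $\iota_{\phi_0}$ is a degree $-1$ derivation of $[~,~]^{\phi_0}$; your fallback route --- checking (3) via $\phi_0(X_0)=-P(\phi_0,\phi_0)=0$ and $\rho_*(\phi_0)=\rho(P^{\sharp}\phi_0)=-\rho(X_0)$, and (4) as the degree-one case --- is what the paper implicitly does, made explicit. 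What your main route buys is that (3) and (4) fall out of one identity; what the paper's route buys is brevity, since Remark 4.2(i) already reduces (2) to the cases $P=f$ and $P=X$. (Note that even in your version, the degree-one case is a short computation using $d^{\phi_0}\phi_0=d\phi_0=0$ and the definition of $[~,~]_*$, not an instantaneous collapse; but this is no less detailed than the paper.)

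One warning about the step you yourself flag as the main obstacle: your twist-matching is stated with the wrong signs, and it comes out right only because the paper's Theorem 3.2(iv) is itself missigned for even-degree $P$. Dualizing Proposition 3.1(i), the Leibniz correction of $d_*^{X_0}$ is $-X_0\wedge(\cdots)$, not $+X_0\wedge(\cdots)$ as you assert. On the other side, Theorem 3.2(iv) as quoted would give $[P,\,\cdot\,]^{\phi_0}$ the correction $-(\iota_{\phi_0}P)\wedge(\cdots)=+X_0\wedge(\cdots)$; but that statement is inconsistent with the displayed definition of $[~,~]^{\phi}$ when $r$ is even: testing on functions, the definition gives $[P,fg]^{\phi_0}=[P,f]g+f[P,g]-fg\,\iota_{\phi_0}P$, while (iv) yields $[P,f]g+f[P,g]-3fg\,\iota_{\phi_0}P$. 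The correct correction term in (iv) is $-(-1)^{r-1}(\iota_{\phi}P)\wedge(\cdots)$, which for $r=2$ equals $+(\iota_{\phi_0}P)\wedge(\cdots)=-X_0\wedge(\cdots)$. With both signs repaired, the two operators do satisfy the same twisted Leibniz rule, agree on functions and sections, and hence coincide everywhere --- so your extension $d_*^{X_0}=[P,\,\cdot\,]^{\phi_0}$ and everything downstream of it is true. But as written, your matching rests on two sign errors cancelling; fix both (and do not take the paper's Theorem 3.2(iv) at face value) before relying on this step.
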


\begin{proof}
 since $ d^{X_0}_{*} X = [P, X]^{\phi_0} $, for all $X \in \Gamma{A}$. Therefore the compatibility condition (1) of the generalized Lie bialgebroid follows from the graded Jacobi identity of the $\phi_0$-deformed Schouten bracket of $A$.
Conditions (3) and (4) are obvious to check. Hence the proof.

\end{proof}

This type of generalized Lie bialgebroids are called Triangular generalized Lie bialgebroids.\\

\mbox{ }\\

\providecommand{\bysame}{\leavevmode\hbox to3em{\hrulefill}\thinspace}
\providecommand{\MR}{\relax\ifhmode\unskip\space\fi MR }
\providecommand{\MRhref}[2]{%
  \href{http://www.ams.org/mathscinet-getitem?mr=#1}{#2}
}
\providecommand{\href}[2]{#2}

\end{document}